\newcommand{\Str}[1]{\texttt{#1}}
\newcommand{\Name}[1]{\textsf{#1}}
\newcommand{\Times}[1]{{#1}$\times$}
\newcommand{\Ceil}[1]{\lceil{#1}\rceil}
\newcommand{\Floor}[1]{\lfloor{#1}\rfloor}
\theoremstyle{plain}
\newtheorem{theorem}{Theorem}
\newcommand\blfootnote[1]{%
  \begingroup
  \renewcommand\thefootnote{}\footnote{#1}%
  \addtocounter{footnote}{-1}%
  \endgroup
}
\title{Engineering faster double-array Aho--Corasick automata$^*$}
\author[1]{Shunsuke Kanda}
\author[1]{Koichi Akabe}
\author[2]{Yusuke Oda}
\affil[1]{LegalOn Technologies Research, Japan}
\affil[2]{Center for Data-driven Science and Artificial Intelligence, Tohoku University, Japan}
\date{}
\begin{document}

\maketitle

\blfootnote{$^*$This is the peer reviewed version of the following article: ``Shunsuke Kanda, Koichi Akabe, and Yusuke Oda. Engineering faster double-array Aho--Corasick automata. Software: Practice and Experience (SPE), 53(6): 1332--1361, 2023,'' which has been published in final form at \url{https://doi.org/10.1002/spe.3190}. This article may be used for non-commercial purposes in accordance with Wiley Terms and Conditions for Use of Self-Archived Versions. This article may not be enhanced, enriched or otherwise transformed into a derivative work, without express permission from Wiley or by statutory rights under applicable legislation. Copyright notices must not be removed, obscured or modified. The article must be linked to Wiley’s version of record on Wiley Online Library and any embedding, framing or otherwise making available the article or pages thereof by third parties from platforms, services and websites other than Wiley Online Library must be prohibited.}

\begin{abstract}
\noindent
Multiple pattern matching in strings is a fundamental problem in text processing applications such as regular expressions or tokenization. This paper studies efficient implementations of \emph{double-array Aho--Corasick automata} (DAACs), data structures for quickly performing the multiple pattern matching. The practical performance of DAACs is improved by carefully designing the data structure, and many implementation techniques have been proposed thus far. A problem in DAACs is that comprehensive descriptions and experimental analyses on their ideas are not provided. Engineers face difficulties in implementing an efficient DAAC.

In this paper, we review implementation techniques for DAACs and provide a comprehensive description of them. We also propose several new techniques for further improvement. We conduct exhaustive experiments through real-world datasets and reveal the best combination of techniques to achieve a higher performance in DAACs. The best combination is different from those used in the most popular libraries of DAACs, which demonstrates that their performance can be further enhanced. On the basis of our experimental analysis, we developed a new Rust library for fast multiple pattern matching using DAACs, named \emph{Daachorse}, as open-source software at \url{https://github.com/daac-tools/daachorse}. Experiments demonstrate that Daachorse outperforms other AC-automaton implementations, indicating its suitability as a fast alternative for multiple pattern matching in many applications.
\end{abstract}

\section{Introduction}

Multiple pattern matching in strings is a fundamental problem in text and natural language processing \cite{navarro2002flexible}.
Given a set of patterns and a text, the goal of this problem is to report all occurrences of patterns in the text.
A representative application is regular expressions, which provide a powerful way to express richer patterns using meta characters, including repetitions and wildcards.
Another is tokenization in unsegmented natural languages such as Japanese and Chinese, which partitions a sentence into shorter units called tokens or words.
Multiple pattern matching is essential in these applications, and its time efficiency is crucial.

The \emph{Aho--Corasick (AC) algorithm} \cite{aho1975efficient} is a fast solution for multiple pattern matching.
It uses an \emph{AC automaton} and performs the matching with $O(n)$ character comparisons, where $n$ is the length of an input text.
Despite the theoretical guarantee, the practical performance of the AC algorithm is significantly affected by its internal data structure used to represent the AC automaton \cite{nieminen2007efficient}.
Thus, carefully designing the internal data structure is vital to achieve faster matching.

\emph{Double-array AC automata (DAACs)} are AC-automaton representations for fast matching.
The core component is the \emph{double-array} \cite{aoe1989efficient}, a data structure to implement transition lookups in an optimal time.
Prior experiments \cite{nieminen2007efficient} demonstrated that DAACs were faster than various other representations.
Thanks to their time efficiency, double-array structures are used in a wide range of applications such as dictionary lookups \cite{darts,dartsclone}, compressed string dictionaries \cite{kanda2017compressed}, tokenization \cite{kudo-etal-2004-applying,song2021fast}, language models \cite{norimatsu2016fast,yasuhara2013efficient}, classification \cite{yoshinaga2014self}, and search engines \cite{groonga}.

To achieve a higher performance in DAACs, it is essential to carefully design the data structure with regard to the target applications and data characteristics.
In the three decades since the original idea of the double-array was proposed by Aoe \cite{aoe1989efficient}, many implementation techniques have been developed from different points of view, such as scalability \cite{fuketa2004new,kanda2017compressed,kanda2016compression}, cache efficiency \cite{yata2007compact,kanda2017rearrangement}, construction speed \cite{morita2001fast,oono2003fast,yata2008fast}, and specializations \cite{yata2008fast,norimatsu2016fast,yasuhara2013efficient}.
In addition to these academic studies, many open-source libraries have been developed, e.g., \cite{darts,dartsclone,Vonng-ac,AhoCorasickDoubleArrayTrie}, which sometimes contain original techniques.
The problem is that comprehensive descriptions and experimental analyses on their ideas are not provided, which makes it difficult for engineers to implement an efficient DAAC.

\paragraph{Our contributions}

In this paper, we review various implementation techniques in DAACs and provide a comprehensive description of them, including categorization for facilitating comparison (as summarized in Table \ref{tab:summary}).
We also propose several new techniques for further improvement.
We provide exhaustive experimental analysis through real-world datasets and reveal the best combination of implementation techniques.
The best combination is different from those used in the most popular libraries of DAACs \cite{AhoCorasickDoubleArrayTrie,Vonng-ac}, demonstrating that their performance can be further enhanced.

On the basis of our experimental analysis, we develop a new Rust library for fast multiple pattern matching using DAACs, named \emph{Daachorse}, as open-source software at \url{https://github.com/daac-tools/daachorse} under the Apache-2.0 or MIT license.\footnote{We also release a Python wrapper of Daachorse at \url{https://github.com/daac-tools/python-daachorse}.}
Our experiments demonstrate that Daachorse outperforms other AC-automaton implementations, indicating its suitability as a fast alternative for multiple pattern matching in many applications.
Daachorse has been plugged into \emph{Vaporetto} \cite{vaporetto}, a Japanese tokenizer written in Rust.
Vaporetto is a fast implementation of the pointwise prediction method \cite{neubig2011pointwise,shinnou2000deterministic,sassano2002empirical} that determines token boundaries using a discriminative model.
The core step of this method is feature extraction performed with the AC algorithm.
Its running time significantly affects the entire processing time, and the fast pattern matching provided by Daachorse assures the time efficiency of Vaporetto.
For example, Daachorse performs tokenization \Times{2.6} faster than other implementations, as demonstrated in this paper.

Our experiments were conducted with the system programming language Rust, although we do not believe that the results will impact our general conclusion.
This is because most computations in our algorithms are simple array manipulations, and the output binaries will not differ significantly from those using other system languages.\footnote{To generate succinct binaries, we carefully implemented core components (e.g., transition functions) by introducing optimization keywords and types (e.g., \texttt{inline}, \texttt{unsafe}, and \texttt{NonZeroU32}) and eliminated redundant instructions to measure the time performance.}

\section{Preliminaries}
\label{sect:pre}

\subsection{Basic definition and notation}
\label{sect:pre:basic}

A \emph{string} is a finite sequence of characters over a finite integer alphabet $\Sigma = \{ 0,1,\dots,|\Sigma| -1 \}$.
Our strings always start at position zero.
The empty string $\varepsilon$ is a string of length zero.
Given a string $P$ of length $n \geq 1$, $P[i..j)$ denotes the \emph{substring} $P[i],P[i+1],\dots,P[j-1]$ for $0 \leq i \leq j \leq n$.
Specially, $P[0..i)$ is a \emph{prefix} of $P$, and $P[i..n)$ is a \emph{suffix} of $P$ for $0 \leq i \leq n$.
Let $|P| := n$ denote the length of $P$.
The same notation is applied to \emph{arrays}.
We denote the cardinality of a set $A$ by $|A|$.

\subsection{Multiple pattern matching}
\label{sect:pre:mpm}

Given a set of strings $\mathcal{D} = \{ P_1, P_2, \dots, P_{|\mathcal{D}|} \}$ and a string $T$, the goal of multiple pattern matching is to report all occurrences $\{ (k,i,j) : P_k \in \mathcal{D}, P_k = T[i..j) \}$, where an occurrence $(k,i,j)$ consists of the index $k$ of the matched pattern $P_k$ and the starting and ending positions $i,j$ appearing in $T$.
Throughout this paper, we refer to $\mathcal{D}$ as a \emph{dictionary}, $P_k$ as a \emph{pattern}, and $T$ as a \emph{text}.

Table \ref{tab:dictioanry} shows an example of a dictionary $\mathcal{D}$ consisting of six patterns.
In all examples throughout this paper, we denote indices of patterns by upper-case letters instead of numbers and integer elements in $\Sigma$ by lower-case letters.
Given the dictionary and text $T[0..6) = \Str{abacdd}$, the occurrences are $(\Str{A},0,2)$, $(\Str{B},1,2)$, $(\Str{D},1,4)$, and $(\Str{F},4,6)$.

\begin{table}[tb]
\footnotesize
\centering
\caption{
A dictionary $\mathcal{D}$ of six patterns (used in examples throughout this paper).
Patterns are indexed with upper-case letters $\Str{A},\Str{B},\Str{C},\dots$ instead of numbers in the examples.
Elements in the integer alphabet $\Sigma$ are denoted with lower-case letters.
$\Sigma = \{ \Str{a} = 0,\Str{b} = 1,\Str{c} = 2,\Str{d} = 3 \}$.
}
\label{tab:dictioanry}
\begin{tabular}{|c||c|}
\hline
Index & Pattern \\
\hline\hline
\Str{A} & \Str{ab} \\
\Str{B} & \Str{b} \\
\Str{C} & \Str{bab} \\
\Str{D} & \Str{bac} \\
\Str{E} & \Str{db} \\
\Str{F} & \Str{dd} \\
\hline
\end{tabular}
\end{table}

\subsection{Aho--Corasick algorithm}
\label{sect:pre:ac}

The AC automaton \cite{aho1975efficient} is a finite state machine to find all occurrences of patterns in a single scan of a text.
The AC automaton for a dictionary $\mathcal{D}$ is defined as the 5-tuple $(S,\Sigma,\delta,f,h)$:
\begin{itemize}
\item $S = \{ 0,1,\dots,|S|-1 \}$ is a finite set of states, where each state is identified by an integer, and the initial state is indicated by 0;
\item $\Sigma = \{ 0,1,\dots,|\Sigma| -1 \}$ is the alphabet;
\item $\delta: S \times \Sigma \rightarrow S \cup \{ -1 \}$ is a transition function, where $-1$ is an invalid state id;
\item $f: S \setminus \{ 0 \} \rightarrow S$ is a failure function; and
\item $h: S \rightarrow \mathcal{P}(\{ 1,2,\dots, | \mathcal{D} | \})$ is an output function, where $\mathcal{P}(\cdot)$ is the power set.
\end{itemize}
Figure \ref{fig:ac:pma} shows an example of the AC automaton.

\begin{figure}[tb]
\centering
\subfloat[AC automaton]{
\includegraphics[scale=0.8]{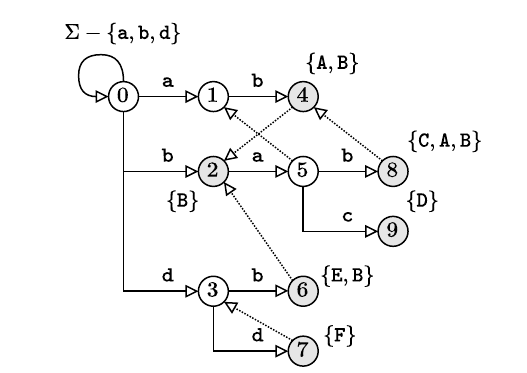}
\label{fig:ac:pma}
}
\subfloat[Trie]{
\includegraphics[scale=0.8]{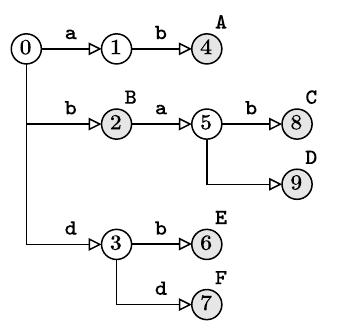}
\label{fig:ac:trie}
}
\caption{
Examples of (a) an AC automaton for the dictionary of Table \ref{tab:dictioanry} and (b) its trie part.
Transitions are depicted by solid line arrows.
$\delta(0,\Str{b}) = 2$, $\delta(2,\Str{a}) = 5$, and $\delta(2,\Str{c}) = -1$.
We depict the mappings of the failure function (except ones to the initial state) by dotted line arrows.
$f(4)=2$, $f(5)=1$, $f(6)=2$, $f(7)=3$, $f(8)=4$, and $f(s) = 0$ for the other states $s$.
Output states are shaded and associated with pattern indices (drawn from $\Str{A},\Str{B},\Str{C},\dots$).
$h(2)=\{\Str{B}\}$, $h(4)=\{\Str{A,B}\}$, $h(6)=\{\Str{E,B}\}$, $h(7)=\{\Str{F}\}$, $h(8)=\{\Str{C,A,B}\}$, $h(9)=\{\Str{D}\}$, and $h(s) = \emptyset$ for the other states $s$.
}
\label{fig:ac}
\end{figure}

The transition function $\delta$ is built on a \emph{trie} for the dictionary $\mathcal{D}$.
The trie \cite{fredkin1960trie} is a tree automaton formed by merging the prefixes of patterns in $\mathcal{D}$.
Figure \ref{fig:ac:trie} shows the trie in the AC automaton.
A state $s$ in the trie represents any prefix of patterns in $\mathcal{D}$, and the prefix can be extracted by concatenating transition labels from the initial state to state $s$.
We denote by $\phi(s)$ the string represented by state $s$.
For example, $\phi(4) = \Str{ab}$ and $\phi(5) = \Str{ba}$ in Figure \ref{fig:ac:trie}.
The initial state always represents the empty string $\varepsilon$, i.e., $\phi(0) = \varepsilon$.
We call states satisfying $\phi(s) \in \mathcal{D}$ \emph{output states}.
If a state $s$ does not indicate any other state with character $c$, $
\delta(s,c) = -1$ is defined using the invalid state id $-1$.
There is one difference in the definition of $\delta$ between the AC automaton and trie:
the AC automaton redefines special transitions $\delta(0,c) := 0$ for labels $c$ such that $\delta(0,c) = -1$ in the trie.

The failure function $f$ maps a state $s \in (S \setminus \{0\})$ to another state $t \in (S \setminus \{s\})$ such that $\phi(t)$ is a longer suffix of $\phi(s)$ than $\phi(t')$ for $t' \in (S \setminus \{s,t\})$.
$f(s)$ is used for characters $c \in \Sigma$ such that $\delta(s,c) = -1$.
Note that, since the empty string $\varepsilon$ is a suffix of any string, the initial state is always one of the candidates.
Let $F(s)$ be a set of output states reached through only the failure function from state $s$.
For example, $F(8) = \{8,4,2\}$ in Figure \ref{fig:ac:pma}.
The output function $h(s)$ is the set of pattern indices associated with output states in $F(s)$,
i.e., $h(s) = \{ k : P_k \in \mathcal{D}, P_k =  \phi(t), t \in F(s)\}$.

\paragraph{Matching algorithm}

\begin{algorithm}[tb]
\small
\DontPrintSemicolon
\Input{Text $T$ of length $n$}
\Output{All occurrences of patterns $\{ (k,i,j) : P_k \in \mathcal{D}, P_k = T[i..j) \}$}
$s \gets 0$\;
\For{$j = 0,1,\dots,n-1$}{
    \For{$k \in h(s)$}{
        $i \gets j - |P_k|$\;
        Output an occurrence $(k,i,j)$\;
    }
    $s \gets \delta^*(s,T[j])$\;
}
\For{$k \in h(s)$}{
    $i \gets n - |P_k|$\;
    Output an occurrence $(k,i,n)$\;
}
\caption{Multiple pattern matching using AC automaton.}
\label{algo:ac}
\end{algorithm}

Algorithm \ref{algo:ac} shows the AC algorithm that performs multiple pattern matching using the AC automaton.
The algorithm uses the extended transition function $\delta^*$:
\begin{equation}
\delta^*(s,c)= \begin{cases}
\delta(s,c) & \textrm{if~} \delta(s,c) \neq -1 \\
\delta^*(f(s),c) & \textrm{otherwise}.
\end{cases}
\end{equation}

Given a text $T$, the AC algorithm scans $T$ character by character, visits states from the initial state $s = 0$ with $\delta^*$, and reports occurrences where patterns in $h(s)$ are associated with each visited state $s$.
For example, assume we are scanning text $T[0..6)=\Str{abacdd}$ with the AC automaton in Figure \ref{fig:ac:pma}.
We first move with \Str{ab} from state 0 to state 4 by $\delta^*(0,\Str{a}) = 1$ and $\delta^*(1,\Str{b}) = 4$, and then report two occurrences, $(\Str{A},0,2)$ and $(\Str{B},1,2)$, associated with $h(4)$.
Next, we move to state 9 by $\delta^*(4,\Str{a}) = 5$ and $\delta^*(5,\Str{c}) = 9$ and report occurrence $(\Str{D},1,4)$ associated with $h(9)$.
Last, we move to state 7 with $\delta^*(9,\Str{d}) = 3$ and $\delta^*(3,\Str{d}) = 7$ and report occurrence $(\Str{F},4,6)$ associated with $h(7)$.

In the scanning, the number of states visited with $\delta$ and $f$ is bounded by $2n$ for a text of length $n$.
The algorithm runs in $O(n+\textsf{occ})$ time in the most efficient case, where $\textsf{occ}$ is the number of occurrences.

\subsection{Double-array Aho--Corasick automata (DAACs)}
\label{sect:pre:da}

\begin{figure}[tb]
\centering
\includegraphics[scale=0.85]{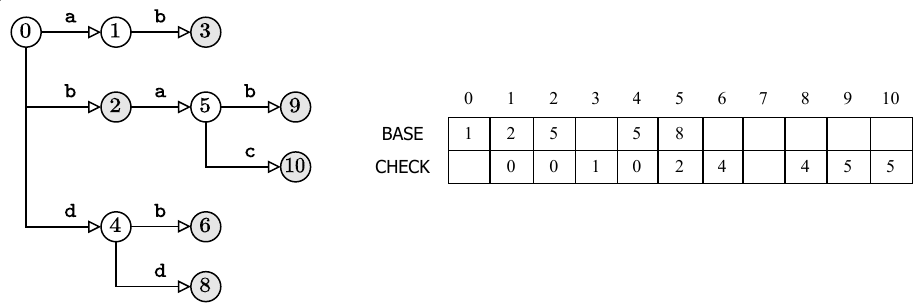}
\caption{
\Name{BASE} and \Name{CHECK} implementing the transition function $\delta$ of Figure \ref{fig:ac:trie}.
$\Sigma = \{ \Str{a} = 0,\Str{b} = 1,\Str{c} = 2,\Str{d} = 3 \}$.
The state ids are assigned to satisfy Equation \eqref{eq:da}.
$\delta(0,\Str{b}) = 2$ is simulated by $\Name{BASE}[0] + \Str{b} = 2$ and $\Name{CHECK}[2] = 0$.
$\delta(2,\Str{d}) = -1$ is simulated by $\Name{BASE}[2] + \Str{d} = 8$ and $\Name{CHECK}[8] \neq 2$.
The state id 7 is a vacant id because its element does not represent any state of the original trie.
}
\label{fig:da:trie}
\end{figure}

The \emph{double-array} \cite{aoe1989efficient} is a data structure to implement the transition function $\delta$ using two one-dimensional arrays: \emph{\Name{BASE}} and \emph{\Name{CHECK}} (see Figure \ref{fig:da:trie}).
The double-array arranges original states in $S$ onto \Name{BASE} and \Name{CHECK} and assigns new state ids to the original states.
An element of \Name{BASE} and \Name{CHECK} corresponds to an original state, and its array offsets indicate new state ids.
The \Name{BASE} and \Name{CHECK} arrays are constructed so that new state ids satisfy the following equations when $\delta(s,c) = t$ (except for $t = -1$):
\begin{equation}
\label{eq:da}
\Name{BASE}[s] + c = t ~ \textrm{and} ~ \Name{CHECK}[t] = s.
\end{equation}

Equation \eqref{eq:da} enables us to look up a transition $\delta(s,c)$ in two very simple steps:
(i) computing $t := \Name{BASE}[s] + c$ and (ii) returning $t$ if $\Name{CHECK}[t] = c$ or $-1$ otherwise.
The time complexity is $O(1)$.

Let $S_{\Name{BC}} = \{ 0,1,\dots,|S_{\Name{BC}}|-1 \}$ be a set of state ids in a resulting double-array structure.
The space complexity of \Name{BASE} and \Name{CHECK} is $O(|S_{\Name{BC}}|)$, where $|S_{\Name{BC}}| = |S|$ in the best case and $|S_{\Name{BC}}| = |S||\Sigma|$ in the worst case, as $S_{\Name{BC}}$ can contain unused ids to satisfy Equation \eqref{eq:da}.
For example, state id 7 is unused in Figure \ref{fig:da:trie}.
We call such state ids \emph{vacant ids}.
Constructing \Name{BASE} and \Name{CHECK} as few vacant ids as possible is important for space efficiency.

\paragraph{Construction of \Name{BASE} and \Name{CHECK}}

Constructing \Name{BASE} and \Name{CHECK} with an original trie implementing $\delta$ requires visiting states from the root and searching for values of $\Name{BASE}[s]$ for each state $s$ to satisfy Equation \eqref{eq:da}.
More formally, for a state $s$ having outgoing transitions with $k$ labels $c_1,c_2,\dots, c_k $, we search for a \Name{BASE} value $b$ such that $b + c_j$ is a vacant id for each $1 \leq j \leq k$ and define $\Name{BASE}[s] = b$ and $\Name{CHECK}[b + c_j] = s$, which we call a \emph{vacant search}.
In Section \ref{sect:tech:constr}, we will discuss approaches to accelerate vacant searches.

\paragraph{Extension to AC automaton}

We can extend the double-array to DAACs by introducing components for failure and output functions.
Figure \ref{fig:da} shows an example DAAC for the AC automaton in Figure \ref{fig:ac:pma}.
The failure function $f$ is implemented with array \Name{FAIL}{} such that $\Name{FAIL}[s] = f(s)$.
Using the arrays \Name{BASE}, \Name{CHECK}, and \Name{FAIL}, the extended transition function $\delta^{*}$ is implemented as Algorithm \ref{algo:daac-delta}.

\begin{figure}[tb]
\centering
\includegraphics[scale=0.85]{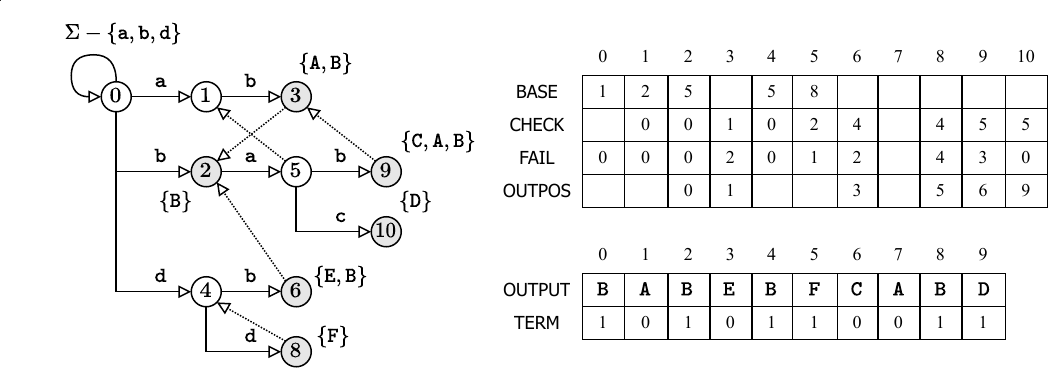}
\caption{
DAAC for the AC automaton in Figure \ref{fig:ac:pma}.
}
\label{fig:da}
\end{figure}

\begin{algorithm}[tb]
\small
\DontPrintSemicolon
\Repeat{}{
    $t \gets \Name{BASE}[s] + c$\;
    \uIf{$\Name{CHECK}[t] = s$}{
        \KwRet $t$\;
    }
    \ElseIf{$s = 0$}{
        \KwRet $0$\;
    }
    $s \gets \Name{FAIL}[s]$\;
}
\caption{Implementation of $\delta^{*}(s,c)$ using DAAC.}
\label{algo:daac-delta}
\end{algorithm}

The output function $h$ is implemented with three arrays:
\Name{OUTPUT} arranges values in $h(s)$ for output states $s$,
\Name{TERM} stores bit flags to identify terminals of each set $h(s)$ in \Name{OUTPUT}, and
$\Name{OUTPOS}[s]$ stores the starting positions of each set $h(s)$ in \Name{OUTPUT}.
In construction, we visit each output state $s$, append values in $h(s)$ to \Name{OUTPUT} and $|h(s)| - 1$ 0s and one 1 to \Name{TERM}, and set the starting positions in $\Name{OUTPOS}[s]$ accordingly.
Using these arrays, we can extract $h(s)$ by scanning $\Name{OUTPUT}[i..j]$ from the starting position $i = \Name{OUTPOS}[s]$ until encountering $\Name{TERM}[j] = 1$.

\begin{algorithm}[tbp]
\small
\DontPrintSemicolon
\Input{Dictionary $\mathcal{D} = \{ P_1, P_2, \dots, P_{|\mathcal{D}|} \}$}
\Output{\Name{BASE}, \Name{CHECK}, \Name{FAIL}, \Name{OUTPOS}, \Name{OUTPUT}, and \Name{TERM}, accepting $\mathcal{D}$}
Construct an original AC automaton $(S,\Sigma,\delta,f,h)$ from $\mathcal{D}$ using a simple data structure such as the linked list form \cite{nieminen2007efficient}, with Algorithms 2 and 3 in \cite{nieminen2007efficient}\;
Initialize \Name{BASE}, \Name{CHECK}, \Name{FAIL}, and \Name{OUTPOS} as arrays with enough elements\;
Initialize \Name{OUTPUT} and \Name{TERM} as empty arrays\;
Initialize $\psi$ as an array with $|S|$ elements\tcc*[r]{Mapping from state ids in $S$ to new state ids of the DAAC}
$Q \gets \{ (0,0) \}$\tcc*[r]{Queue to traverse the original automaton, initialized with the initial state ids}
\While{$Q \neq \emptyset$}{
    Pop $(s,s')$ from $Q$\tcc*[r]{$s \in S$ and $s'$ is the corresponding state id of the DAAC}
    $\psi[s] \gets s'$\;
    \If{$s \neq 0$}{
        $\Name{FAIL}[s'] \gets \psi[f(s)]$\tcc*[r]{$\psi[f(s)]$ is always valid because of the breadth-first traversal using $Q$}
    }
    \If{$h(s) \neq \emptyset$}{
        \eIf{$s \neq 0 \land h(s) = h(f(s))$}{
            $\Name{OUTPOS}[s'] \gets \Name{OUTPOS}[\Name{FAIL}[s']]$\;
        }{
            $\Name{OUTPOS}[s'] \gets |\Name{OUTPUT}|$\;
            Append values in $h(s)$ to \Name{OUTPUT}\;
            Append $|h(s)|-1$ 0s and one 1 to \Name{TERM}\;
        }
    }
    $E \gets$ a set of labels of outgoing transitions from state $s$\;
    \If{$E \neq \emptyset$}{
        $\Name{BASE}[s'] \gets$ integer $b$ such that $b + c$ is a vacant id for each $c \in E$\tcc*[r]{Vacant search}
        \For{$c \in E$}{\nllabel{algo:constr:line:edges}
            $(t,t') \gets (\delta(s,c), \Name{BASE}[s'] + c)$\;
            $\Name{CHECK}[t'] \gets s'$\;
            Push $(t, t')$ to $Q$\;
        }
    }
}
Output \Name{BASE}, \Name{CHECK}, \Name{FAIL}, \Name{OUTPOS}, \Name{OUTPUT}, and \Name{TERM}\;
\caption{Construction of the data structure of DAAC.}
\label{algo:constr}
\end{algorithm}

Algorithm \ref{algo:constr} shows the construction algorithm of DAAC.
The algorithm first construct an original AC automaton from an input dictionary using a simple data structure and then converts it into the DAAC representation.

\section{Implementation techniques in DAACs}
\label{sect:tech}

In this section, we review implementation techniques to improve the DAAC performance and describe them based on our categorization (summarized in Table \ref{tab:summary}). 

\subsection{Management of output sets}
\label{sect:tech:out}

We first describe how to manage output sets efficiently.
Figure \ref{fig:da} shows a simple approach that arranges values in $h(s)$ in an array.
We call this approach \Name{Simple}.
An advantage of \Name{Simple} is the \emph{locality of reference} that can extract $h(s)$ by a sequential scan.
However, \Name{Simple} can maintain many duplicate values in \Name{OUTPUT}.
For example, value \Str{B} appears four times in the \Name{OUTPUT} of Figure \ref{fig:da}.
The length of \Name{OUTPUT} is bounded by $O(|\mathcal{D}| \cdot K)$, where $K$ is the average length of patterns in the dictionary.
In the following, we present two additional approaches, \Name{Shared} and \Name{Forest}, to improve the memory efficiency of \Name{Simple}.

\begin{figure}[tb]
\centering
\subfloat[\Name{Shared}]{
\includegraphics[scale=0.9]{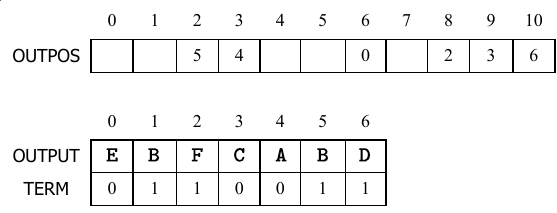}
\label{fig:out:shared}
}\\
\subfloat[\Name{Forest}]{
\includegraphics[scale=0.9]{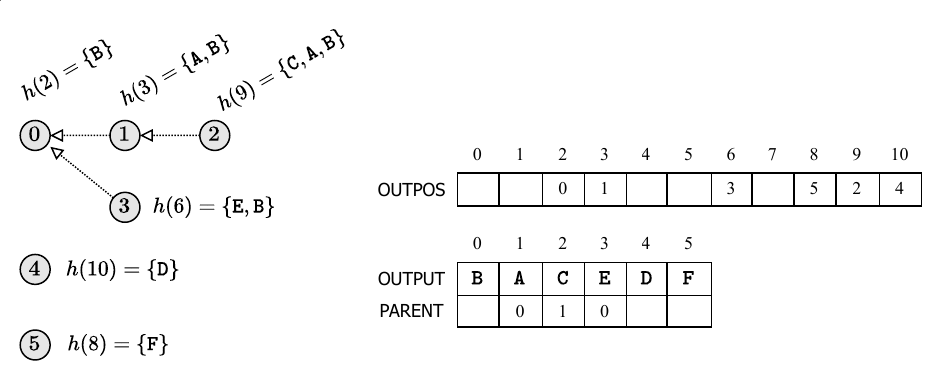}
\label{fig:out:forest}
}
\caption{
Examples of approaches to store output sets of Figure \ref{fig:da}.
}
\label{fig:out}
\end{figure}

\paragraph{Shared arrangement}

To design \Name{Shared}, we exploit the fact that
$h(t) \subseteq h(s)$ when state $t$ can be reached from state $s$ through only failure functions.
For example, $h(2) \subseteq h(3) \subseteq h(9)$ in Figure \ref{fig:da}.
This fact indicates that values in $h(t)$ can be represented as a part of the values in $h(s)$.
\Name{Shared} implements the \Name{OUTPUT} and \Name{TERM} arrays while sharing such common parts.
Figure \ref{fig:out:shared} shows an example of \Name{Shared}, where $h(2)$ and $h(3)$ are represented in the rear parts of $\Name{OUTPUT}[3..5]$ for $h(9)$.
\Name{Shared} can extract $h(s)$ in the same manner as \Name{Simple} and does not lose the locality of reference.

In construction, we find shareable common parts using a greedy algorithm.
We visit output states $s$ (such that $h(s) \neq \emptyset$) in order of depth, starting with the deepest state.
For each visited state $s$, we share the output sets $h(t)$ with $h(s)$, where states $t$ are output states reachable through only failure functions from $s$;
however, if $h(t)$ has already been shared with another output set, we skip the sharing.

\paragraph{Forest representation}

A drawback of \Name{Shared} is that it cannot remove all duplicate values, and the space complexity of \Name{OUTPUT} and \Name{TERM} is not improved.
For example, value \Str{B} still appears twice in Figure \ref{fig:out:shared}.
\Name{Forest} is an alternative approach that maintains only unique pattern indices \cite{belazzougui2010succinct}.

Multiple trees, or a \emph{forest}, can be constructed from an AC automaton by chaining output states through the failure function, which we call an \emph{output forest}.
The left part of Figure \ref{fig:out:forest} shows such an output forest constructed from the AC automaton in Figure \ref{fig:da}.
In the output forest, we can extract values in $h(s)$ by climbing up the corresponding tree to the root.

\Name{Forest} constructs an output forest whose nodes are indexed by numbers from $\{0,1,\dots,|\mathcal{D}| - 1\}$.
Our data structure represents the forest using two arrays, \Name{OUTPUT} and \Name{PARENT}, such that $\Name{OUTPUT}$ stores pattern indices and $\Name{PARENT}$ stores parent positions.
$\Name{OUTPOS}[s]$ stores the node index corresponding to an output state $s$ in the AC automaton.
$h(s)$ is extracted by visiting $\Name{OUTPUT}[i]$ from node $i := \Name{OUTPOS}[s]$ while updating $i := \Name{PARENT}[i]$ until encountering the root.
The right part of Figure \ref{fig:out:forest} shows the data structure.
Algorithm \ref{algo:forest} shows the algorithm to extract $h(s)$ for a given state $s$.

\begin{algorithm}[tbp]
\small
\DontPrintSemicolon
\eIf{$\Name{OUTPOS}[s] = \Name{NIL}$}{
    Output an empty set\;
}{
    $i \gets \Name{OUTPOS}[s]$\;
    $H \gets \{ \Name{OUTPUT}[i] \}$\;
    \While{$\Name{PARENT}[i] \neq \Name{NIL}$}{
        $i \gets \Name{PARENT}[i]$\;
        $H \gets H \cup \{ \Name{OUTPUT}[i] \}$\;
    }
    Output $H$\;
}
\caption{Extracting $h(s)$ with \Name{Forest} data structures. \Name{NIL} in \Name{OUTPOS} and \Name{PARENT} means that it does not indicate any position.}
\label{algo:forest}
\end{algorithm}

The advantage of \Name{Forest} is that the space complexity of \Name{OUTPUT} and \Name{PARENT} is bounded by $O(|\mathcal{D}|)$.
However, traversing an output forest requires random accesses, which sacrifices the locality of reference in \Name{Simple} and \Name{Shared}.
Also, the pointer array \Name{PARENT} consumes a larger space than the bit array \Name{TERM}.

\subsection{Byte- and character-wise automata}
\label{sect:tech:wise}

When handling strings consisting of multibyte characters, there are two representations of strings:
\begin{itemize}
\item \Name{Bytewise} represents strings as sequences of bytes in the UTF-8 format and uses byte values for transition labels. The maximum value in the alphabet $\Sigma$ is the maximum byte value appearing in input patterns.
\item \Name{Charwise} represents strings as sequences of code points in Unicode and uses code-point values for transition labels. The maximum value in the alphabet $\Sigma$ is the maximum code-point value appearing in input patterns.
\end{itemize}

For example, the Japanese string ``\begin{CJK}{UTF8}{ipxm}世界\end{CJK}'' (meaning ``the world'') is represented as a sequence of six bytes ``0xE4, 0xB8, 0x96, 0xE7, 0x95, 0x8C'' in \Name{Bytewise} and a sequence of two code points ``U+4E16, U+754C'' in \Name{Charwise}.
The alphabet $\Sigma$ in \Name{Bytewise} is $\{\textrm{0x00},\textrm{0x01},\dots,\textrm{0xE7}\}$, and
the alphabet $\Sigma$ in \Name{Charwise} is $\{\textrm{U+0000},\textrm{U+0001},\dots,\textrm{U+754C}\}$.

In the following, we first describe the advantages and disadvantages when constructing DAACs from strings in the \Name{Bytewise} or \Name{Charwise} scheme;
then, we present an approach called \Name{Mapped} to overcome the disadvantages in the \Name{Charwise} scheme.

\paragraph{Advantages and disadvantages}

There is a trade-off between the alphabet size (or $|\Sigma|$) and the number of automaton states (or $|S|$):
the alphabet size in \Name{Bytewise} is bounded by $2^{8}$ and is smaller than that in \Name{Charwise} bounded by $2^{21}$ (since code points are drawn up to {U+10FFFF}), and
the number of states in \Name{Charwise} is lower than that in \Name{Bytewise}.
For example, consider an AC automaton for the single pattern ``\begin{CJK}{UTF8}{ipxm}世界\end{CJK}''.
The alphabet size in \Name{Bytewise} is 232 (i.e., 0xE7 plus 1) while that in \Name{Charwise} is 30029 (i.e., U+754C plus 1);
thus, the number of states in \Name{Bytewise} is seven while that in \Name{Charwise} is three.

Smaller alphabets enable DAACs to achieve faster construction because the possible number of outgoing transitions from a state is suppressed, which facilitates vacant searches.
Fewer states enable DAACs to achieve faster matching because of suppressing the number of random memory accesses during a matching.
Therefore, \Name{Bytewise} is beneficial for faster construction, and \Name{Charwise} is beneficial for faster matching.

The memory usage of DAACs depends on the number of resulting double-array states $|S_{\Name{BC}}|$, which is the sum of the number of original states $|S|$ and that of vacant ids.
Although \Name{Charwise} can construct an AC automaton with fewer states, its large alphabet can produce more vacant ids because of the difficulty in vacant searches.
Thus, we can achieve memory efficiency by constructing a DAAC with fewer vacant ids in the \Name{Charwise} scheme.

\paragraph{Code mapping}

The \Name{Mapped} approach can overcome the disadvantages in \Name{Charwise} by mapping code points to smaller integers \cite{liu2011compression}.
This approach assigns code values to characters $c \in \Sigma$ with a certain frequency to assign smaller code values to more frequent characters.

We give a formal description of \Name{Mapped}.
Let us denote by $\#(c)$ the number of occurrences of character $c \in \Sigma$ in patterns of $\mathcal{D}$.
We construct the \emph{mapping function} $\pi: \Sigma \rightarrow \Sigma_{\pi}$ such that:
\begin{itemize}
\item $\Sigma_{\pi} = \{ -1, 0, 1, \dots, \sigma - 1 \}$, where $\sigma$ is the number of characters $c$ such that $\#(c) \neq 0$;
\item $\pi(c) = -1$ for characters $c$ when $\#(c) = 0$; and
\item $\pi(c)$ is the number of other characters $c'$ such that $\#(c) < \#(c')$ (breaking ties arbitrarily).
\end{itemize}
We call $\sigma$ the \emph{mapped alphabet size}.

Also, Equation \eqref{eq:da} is modified into
\begin{equation}
\label{eq:mapda}
\Name{BASE}[s] + \pi(c) = t ~ \textrm{and} ~ \Name{CHECK}[t] = s.
\end{equation}
Note that characters $c$ such that $\pi(c)= -1$ are not used in transitions because (i) during construction, the characters do not appear, and (ii) during pattern matching, we can immediately know whether transitions with the characters fail.

If $\Sigma$ includes many characters $c$ such that $\#(c) = 0$, the mapped alphabet size $\sigma$ becomes much smaller than $|\Sigma|$.
Moreover, occurrences of real-world characters are often skewed following Zipf's law \cite{yihan-2021-meaningfulness}, indicating that most of the characters in $\mathcal{D}$ are represented with small code values via the mapping $\pi$.
Prior works have empirically demonstrated that mapping $\pi$ reduces the resultant vacant ids and shortens the construction time \cite{liu2011compression,norimatsu2016fast,yasuhara2013efficient}.

However, we need to store an additional data structure for implementing the mapping $\pi$.
We implement $\pi$ with an array of length $|\Sigma|$ such that the $c$-th element stores the value of $\pi(c)$.
When the array consists of $2^{21}$ four-byte integers, it takes 8 MiB of memory.
In experiments in Sections \ref{sect:exp} and \ref{sect:syst}, the $\pi$ data structure is implemented with a simple four-byte integer array, although Section \ref{sect:exp:wise} provides a comparison result between the cases of four- and three-byte arrays.

\subsection{Memory layout of arrays}
\label{sect:tech:layout}

This paper says that the data structure of DAACs consists of four arrays: \Name{BASE}, \Name{CHECK}, \Name{FAIL}, and \Name{OUTPOS}.
In the following, we describe the two memory layouts of the four arrays: \Name{Individual} and \Name{Packed}.

\begin{figure}[tb]
\centering
\subfloat[\Name{Individual}]{
\includegraphics[scale=0.9]{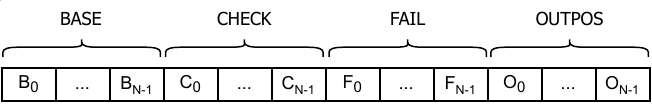}
\label{fig:layout:indi}
}\\
\subfloat[\Name{Packed}]{
\includegraphics[scale=0.9]{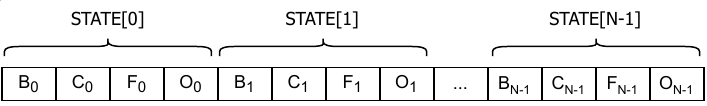}
\label{fig:layout:pack}
}
\caption{
Illustrations of memory layouts of arrays.
}
\label{fig:layout}
\end{figure}

\paragraph{Individual layout}

The \Name{Individual} layout maintains the four arrays individually, as shown in Figure \ref{fig:layout:indi}.

\paragraph{Packed layout}

Since the values of $\Name{BASE}[s]$, $\Name{CHECK}[s]$, $\Name{FAIL}[s]$, and $\Name{OUTPOS}[s]$ are accessed consecutively during a matching,
we should be able to improve the locality of reference by placing these values in a consecutive memory array.
The \Name{Packed} layout represents states using a one-dimensional array \Name{STATE}, where each element consists of the four fields \Name{base}, \Name{check}, \Name{fail}, and \Name{outpos} corresponding to each array.
Figure \ref{fig:layout:pack} illustrates the \Name{Packed} layout.

\subsection{Array formats}
\label{sect:tech:format}

Double-array implementations often use 4-byte integers to represent an offset of arrays (e.g., \cite{darts,dartsclone}).
When \Name{BASE}, \Name{CHECK}, \Name{FAIL}, and \Name{OUTPOS} are implemented as arrays of 4-byte integers, we call this format \Name{Basic}.
In the following, we describe a more memory-efficient format that implements \Name{CHECK} as a byte array in the \Name{Bytewise} scheme, called \Name{Compact}.

\paragraph{Compact format}

The \Name{Compact} format stores transition labels in \Name{CHECK} instead of array offsets \cite{yata2007compact}.
In other words, $\Name{CHECK}[t]$ stores $c$ instead of $s$ when $\delta(s,c) = t$, and Equation \eqref{eq:da} is modified into
\begin{equation}
\Name{BASE}[s] + c = t ~ \textrm{and} ~ \Name{CHECK}[t] = c.
\end{equation}
However, for every state pair $(s,s')$, the following must be satisfied to avoid defining invalid transitions:
\begin{equation}
\label{eq:cda:base}
\Name{BASE}[s] \neq \Name{BASE}[s'].
\end{equation}

In the \Name{Bytewise} scheme, the alphabet size is bounded by 256, and \Name{CHECK}{} can be implemented as a byte array.
The \Name{Compact} format can reduce the memory consumption of \Name{CHECK} to 25\% while maintaining the time efficiency in matching.
In the \Name{Packed} layout, we assign three bytes to \Name{base}, one byte to \Name{check}, and four bytes to \Name{fail} and \Name{outpos} for avoiding extra memory padding on a 4-byte aligned memory structure, following the original implementation \cite{yata2007compact}.
This modification limits the maximum state id to $2^{24} - 1$. To represent state ids up to $2^{29} - 1$ with the same space usage, we can employ a technique used in the Darts-clone library \cite{dartsclone} that utilizes two types of offset values to represent exact and rough positions.
We did not use this technique in our experiments because three bytes are enough to store the automata we tested.

A drawback of \Name{Compact} is that it makes it harder for vacant searches to avoid the duplication of \Name{BASE} values for satisfying Equation \eqref{eq:cda:base};
indeed, this restriction can produce vacant ids that never satisfy Equation \eqref{eq:cda:base}.
For example, a vacant id $s$ never satisfies Equation \eqref{eq:cda:base} when \Name{BASE} values $s, s - 1,\dots, s - |\Sigma| + 1$ are already used.
In Section \ref{sect:exp:format}, we observe that such vacant ids can significantly slow the running times of vacant searches.

\subsection{Acceleration of vacant searches}
\label{sect:tech:constr}

Given a state having $k$ outgoing transitions with labels $c_1,c_2,\dots, c_k $, a na\"ive vacant search is performed to verify if $b + c_j$ for $1 \leq j \leq k$ are vacant ids for each integer $b \geq 0$ until finding such a value $b$.
However, for a maximum state id $N$, this approach verifies $O(N)$ integers in the worst case, resulting in slow construction for a large automaton.
In the following, we describe three acceleration techniques for vacant searches: \Name{Chain}, \Name{SkipForward}, and \Name{SkipDense}.

\paragraph{Chaining vacant ids}

The \Name{Chain} technique constructs a linked list on vacant ids \cite{morita2001fast}.
Let us denote $M$ vacant ids by $q_1,q_2,\dots,q_M$, where $0 < q_1 < q_2 < \dots < q_M < N$.
Given a state having $k$ outgoing transitions with labels $c_1,c_2,\dots, c_k$, \Name{Chain} verifies only \Name{BASE} values $b_i$ such that $b_i = q_i - \min\{ c_1, c_2, \dots, c_k \}$ for $i = 1,2,\dots,M$ while visiting only vacant ids using the linked list.
The number of verifications is bounded by $O(M)$.

If there are few vacant ids, $M$ becomes much smaller than $N$, and vacant searches can be performed faster than with the na\"ive approach.
The resultant double-array structure with \Name{Chain} is always identical to that with the na\"ive approach.

\paragraph{Skipping search blocks}

Before introducing \Name{SkipForward} and \Name{SkipDense}, we present a technique to partition array elements into fixed-size blocks \cite{dartsclone,yoshinaga2014self,kanda2017rearrangement}.
The partitioning modifies Equation \eqref{eq:da} into
\begin{equation}
\label{eq:da_xor}
\Name{BASE}[s] \oplus c = t ~ \textrm{and} ~ \Name{CHECK}[t] = s.
\end{equation}
In other words, the bitwise-XOR operation ($\oplus$) is used instead of the plus operation ($+$).

We introduce a constant parameter $B$ and partition array elements into blocks of size $B$, where the $s$-th element is placed in the $\Floor{s/B}$-th block.
We obtain the following theorem.

\begin{theorem}
Let $B = 2^{\Ceil{\log_2 |\Sigma|}}$.\footnote{In the \Name{Mapped} scheme, $\sigma$ is used instead of $|\Sigma|$.}
When state ids are defined using Equation \eqref{eq:da_xor}, all destination states from a state are always placed in the same block.
\end{theorem}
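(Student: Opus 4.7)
The plan is to exploit the fact that $B$ is defined as a power of two, specifically $B = 2^k$ with $k = \lceil \log_2 |\Sigma| \rceil$, and then to observe that XORing by any label $c \in \Sigma$ only touches the low $k$ bits. Since the block index of a position $t$ is $\lfloor t / B \rfloor$, which in binary is exactly the bits of $t$ at positions $\geq k$, showing that these high bits are invariant under XOR with $c$ immediately gives that all destinations share a block.

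First I would record that every $c \in \Sigma$ satisfies $0 \le c < |\Sigma| \le 2^k = B$, so $c$ has zeros in every bit position $\geq k$. Next I would decompose $\Name{BASE}[s]$ via Euclidean division by $B$, writing $\Name{BASE}[s] = qB + r$ with $0 \le r < B$. Because $B$ is a power of two, $q$ coincides with the bits of $\Name{BASE}[s]$ at positions $\geq k$, and $r$ with the bits at positions $< k$. Then XORing $\Name{BASE}[s]$ with $c$ leaves the high bits fixed and only alters the low $k$ bits, so
\begin{equation*}
\Name{BASE}[s] \oplus c \;=\; qB + (r \oplus c), \qquad 0 \le r \oplus c < B.
\end{equation*}
Consequently, for every label $c$ of an outgoing transition from $s$, the destination $t = \Name{BASE}[s] \oplus c$ satisfies $\lfloor t / B \rfloor = q = \lfloor \Name{BASE}[s] / B \rfloor$, which is a single block index independent of $c$.

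There is no real obstacle here; the only subtlety is making explicit that $c < B$ (which is where the ceiling in $k = \lceil \log_2 |\Sigma| \rceil$ is used, since without the ceiling we could have $|\Sigma|$ exceed $B$ and the low-bit argument would fail for the largest labels). I would also briefly remark in the \Name{Mapped} case that the same argument goes through verbatim after replacing $|\Sigma|$ by $\sigma$, since $\pi(c) < \sigma \le B$ for every $c$ with $\pi(c) \neq -1$.
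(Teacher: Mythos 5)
Your proof is correct and follows essentially the same approach as the paper: both arguments rest on the observation that every label $c$ satisfies $c < 2^{\Ceil{\log_2 |\Sigma|}} = B$, so XORing with $c$ alters only the least significant $\Ceil{\log_2 |\Sigma|}$ bits of $\Name{BASE}[s]$, while the block index $\Floor{t/B}$ depends only on the remaining high bits. Your Euclidean-division phrasing ($\Name{BASE}[s] = qB + r$, with $r \oplus c < B$) is a slightly more explicit rendering of the paper's bit-shift argument, and your remark about where the ceiling is needed is a nice touch, but the underlying idea is identical.
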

\begin{proof}
Computing $\Name{BASE}[s] \oplus c$ is implemented by modifying the least significant $\Ceil{\log_2 |\Sigma|}$ bits of $\Name{BASE}[s]$.
For all characters $c \in \Sigma$, the results of $\Name{BASE}[s] \oplus c$ have the same binary representation except the least significant $\Ceil{\log_2 |\Sigma|}$ bits.
Computing $\Floor{x/B}$ for an integer $x$ is implemented by shifting $\Ceil{\log_2 B} = \Ceil{\log_2 |\Sigma|}$ bits right, and the least significant $\Ceil{\log_2 |\Sigma|}$ bits of $x$ are omitted from the result.
Therefore, $\Floor{(\Name{BASE}[s] \oplus c )/B} = \Floor{\Name{BASE}[s]/B}$ for all characters $c \in \Sigma$, and all destination states $t$ from a state $s$ are placed in the $\Floor{\Name{BASE}[s]/B}$-th block. 
\end{proof}

\begin{figure}[tb]
\centering
\includegraphics[scale=0.9]{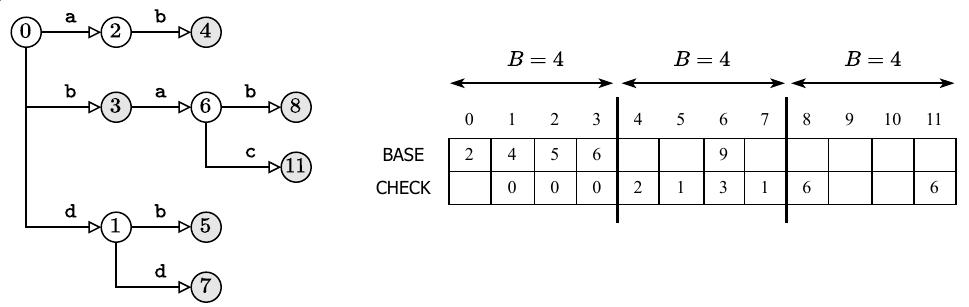}
\caption{
\Name{BASE} and \Name{CHECK} implementing the transition function $\delta$ of Figure \ref{fig:ac:trie} based on Equation \eqref{eq:da_xor}.
The block size $B$ is four, since $|\Sigma| = 4$ and $B = 2^{\Ceil{\log_2 4}}$.
Destination states 5 and 7 from state 1 are placed in the same block, i.e., $\Floor{5/4} = \Floor{7/4} = 1$.
}
\label{fig:vacant:xor}
\end{figure}

Figure \ref{fig:vacant:xor} shows an example of \Name{BASE} and \Name{CHECK} constructed using Equation \eqref{eq:da_xor}.
The partitioning allows us to perform vacant searches for blocks individually and set up conditions if we search each block.
\Name{SkipForward} and \Name{SkipDense} each use a different approach to select which blocks to be searched.

\begin{figure}[tb]
\centering
\subfloat[\Name{SkipForward}]{
\includegraphics[scale=0.9]{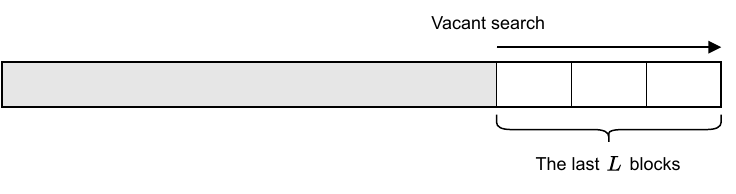}
\label{fig:vacant:t}
}\\
\subfloat[\Name{SkipDense}]{
\includegraphics[scale=0.9]{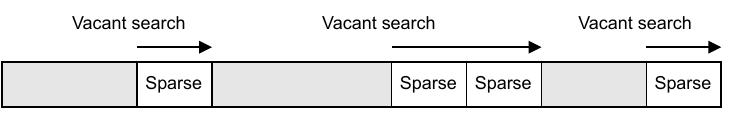}
\label{fig:vacant:s}
}
\caption{
Illustrations of skipping techniques.
}
\label{fig:vacant}
\end{figure}

\Name{SkipForward} searches for only the last $L$ blocks with a constant parameter $L$ \cite{dartsclone}, as shown in Figure \ref{fig:vacant:t}.
This idea is inspired by the fact that backward state ids are more likely to be vacant when performing vacant searches from the forward \cite{morita2001fast}.
\Name{SkipForward} can bound the number of verifications in a vacant search by $O(L \cdot |\Sigma|) = O(|\Sigma|)$, since $L$ is constant.
A drawback of \Name{SkipForward} is that blocks containing many vacant ids might be skipped, which degrades the memory efficiency.

\Name{SkipDense}\footnote{\Name{SkipDense} is inspired by the idea in the Darts library \cite{darts}, which skips forward elements in which the proportion of vacant ids is not less than a certain threshold. \Name{SkipDense} is a minor modification of the Darts technique combined with block partitioning and enables vacant searches to select target blocks more flexibly.} introduces a constant threshold $\tau \in [0,1]$ and categorizes blocks into two classes:
if the proportion of vacant ids in a block is no less than $\tau$, the block is categorized to the \emph{sparse} class;
otherwise, the block is categorized to the \emph{dense} class.
\Name{SkipDense} performs vacant searches for only sparse blocks, as shown in Figure \ref{fig:vacant:s}.
\Name{SkipDense} does not skip blocks that contain many vacant ids and is thus more memory-efficient than \Name{SkipForward}.
However, the number of verifications is not bounded by the alphabet size.

The ideas of both \Name{SkipForward} and \Name{SkipDense} do not conflict with \Name{Chain}.
We therefore apply the technique of \Name{Chain} to \Name{SkipForward} and \Name{SkipDense}.

\subsection{Traversal orders in construction}
\label{sect:tech:order}

\Name{BASE} and \Name{CHECK} are constructed by traversing an original trie from the root and performing vacant searches, as shown in Algorithm \ref{algo:constr}.
Although this algorithm traverses the original trie using a queue in the breadth-first order, we can take an arbitrary order to traverse the trie.
We can also take an arbitrary scan order of outgoing transitions for each visited state (i.e., the loop order of $E$ at Line \ref{algo:constr:line:edges} in Algorithm \ref{algo:constr}).

The orders are related to the resultant arrangement of states in \Name{BASE} and \Name{CHECK}, and the arrangement affects the cache efficiency in transitions:
a transition $\delta(s,c) = t$ can be looked up quickly when states $s$ and $t$ are placed close together (e.g., on the same cache line).
Selecting the traversal order that improves the locality of reference during a matching is crucial.
In the following, we describe the four approaches: \Name{LexBFS}, \Name{LexDFS}, \Name{FreqBFS}, and \Name{FreqDFS}.

\paragraph{Breadth- and depth-first searches}

There are two types of data structures to traverse the original trie.
One uses a queue and visits states with a breadth-first search (as Algorithm \ref{algo:constr}), which we denote by \Name{BFS}.
The other uses a stack and visits states with a depth-first search, which we denote by \Name{DFS}.

We consider performing vacant searches from the forward (i.e., attempting to use smaller state ids first). The resultant arrangements of states using \Name{BFS} and \Name{DFS} are expected to be as follows.
With \Name{BFS}, it is expected that shallower states (i.e., ones around the initial state) are placed around the head of \Name{BASE} and \Name{CHECK}.
If we often visit shallow states during a matching, \Name{BFS} can perform cache-efficiently.
With \Name{DFS}, in contrast, it is expected that deeper states are placed close together.
If we often visit deep states during a matching, \Name{DFS} can perform cache-efficiently.

\begin{figure}[tbp]
\centering
\subfloat[\Name{BFS}]{
\includegraphics[scale=0.9]{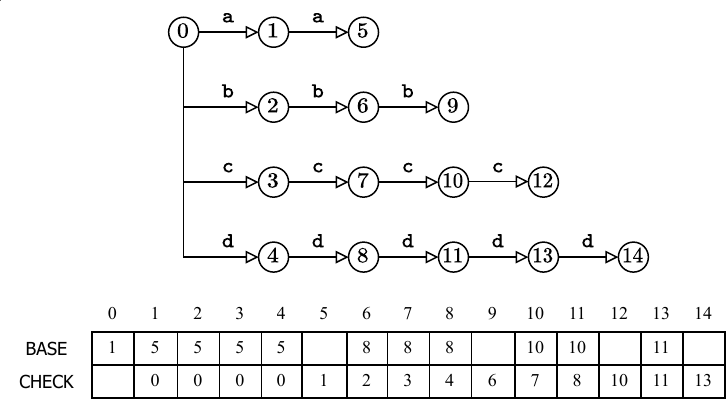}
\label{fig:ex:order:lexbfs}
}\\
\subfloat[\Name{DFS}]{
\includegraphics[scale=0.9]{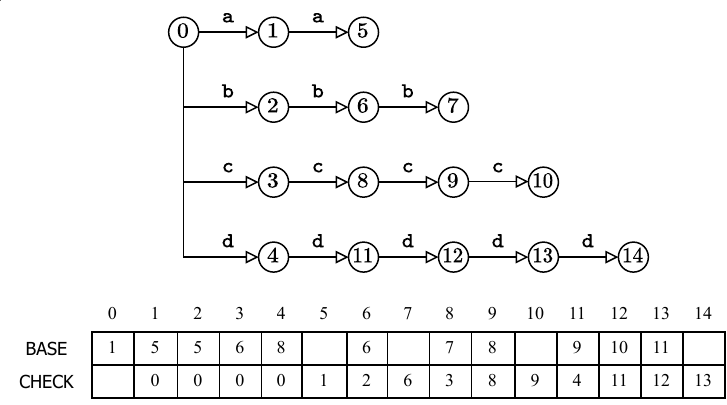}
\label{fig:ex:order:lexdfs}
}
\caption{
Examples of resultant \Name{BASE} and \Name{CHECK} arrays constructed with \Name{BFS} and \Name{DFS} for a dictionary of four patterns \Str{aa}, \Str{bbb}, \Str{cccc}, and \Str{ddddd}.
$\Sigma = \{ \Str{a} = 0,\Str{b} = 1,\Str{c} = 2,\Str{d} = 3 \}$.
The state ids are determined based on Equation \eqref{eq:da}.
In the construction, vacant searches are performed from the forward to use smaller state ids first.
A set of outgoing transition labels (or $E$) is scanned in the lexicographical order (i.e., the figures show the results of \Name{LexBFS} and \Name{LexDFS} more precisely).
In both cases, the initial state and its destination states have identical ids (i.e., states 0 to 4).
This is because the initial state id is always fixed to 0, and its destination state ids are fixed to $\Name{BASE}[0]$ to satisfy Equation \eqref{eq:da}.
The deeper states have different ids in accordance with the order of visiting states.
}
\label{fig:ex:order}
\end{figure}

Figure \ref{fig:ex:order} shows toy examples of resultant \Name{BASE} and \Name{CHECK} arrays with \Name{BFS} and \Name{DFS}.
In \Name{BFS}, shallower states have smaller state ids (see Figure \ref{fig:ex:order:lexbfs}), while
in \Name{DFS}, states whose depth is no less than two have adjacent ids with their destination states (see Figure \ref{fig:ex:order:lexdfs}).
These examples show that \Name{BFS} and \Name{DFS} have better locality of reference when we visit shallower and deeper states, respectively.

\paragraph{Lexicographical and frequency orders}

There are two orders to traverse outgoing transitions for each visited state: 
\Name{Lex} visits transitions in the lexicographical order of their labels, and
\Name{Freq} visits transitions in the decreasing order of the frequencies of their labels in $\mathcal{D}$.
In the \Name{Freq} order, ids of states incoming with frequent transition labels are determined earlier, and it is expected that those states are placed at the head of the array.

\paragraph{Possible combinations}

There are four possible combinations in the traversal order:
\begin{itemize}
\item \Name{LexBFS} is \Name{BFS} with \Name{Lex},
\item \Name{FreqBFS} is \Name{BFS} with \Name{Freq},
\item \Name{LexDFS} is \Name{DFS} with \Name{Lex}, and
\item \Name{FreqDFS} is \Name{DFS} with \Name{Freq}.
\end{itemize}

\subsection{Summary of implementation techniques}

Table \ref{tab:summary} summarizes the implementation techniques described in this section.
Both \Name{Bytewise} and \Name{Mapped} are checked in the ``Selected'' column because the \Name{Mapped} scheme is optimized for Asian languages (such as Japanese, Chinese, and Korean), which deal with large character sets, while the \Name{Bytewise} scheme performs better in other languages with smaller character sets.

\begin{table}[tbp]
\footnotesize
\centering
\caption{
Summary of implementation techniques in DAACs.
``Selected'' indicates the techniques selected to be the best through our experiments in Section \ref{sect:exp} and are used in our Daachorse library.
}
\label{tab:summary}
\subfloat[Approaches to store output sets (Section \ref{sect:tech:out})]{
\begin{tabular}{|c||l|c|c|}
\hline
Technique & Brief description & References  & Selected \\
\hline\hline
\Name{Simple} & Arranging pattern indices in a simple manner & Conventional & \\ \hline
\Name{Shared} & Merging some common parts of pattern indices in \Name{Simple} & This study & \\ \hline
\Name{Forest} & Storing pattern indices in a forest structure & \cite{belazzougui2010succinct} & \checkmark{} \\ \hline
\end{tabular}
}\\
\subfloat[Schemes to handle strings of multibyte characters (Section \ref{sect:tech:wise})]{
\begin{tabular}{|c||l|c|c|}
\hline
Technique & Brief description & References & Selected \\
\hline\hline
\Name{Bytewise} & Slicing multibyte characters into byte sequences & Conventional & \checkmark{} \\ \hline
\Name{Charwise} & Handling multibyte characters as code points in Unicode & Conventional & \\ \hline
\Name{Mapped} & Mapping code points in the frequency order & \cite{liu2011compression,norimatsu2016fast} & \checkmark{} \\ \hline
\end{tabular}
}\\
\subfloat[Memory layouts of double-arrays (Section \ref{sect:tech:layout})]{
\begin{tabular}{|c||l|c|c|}
\hline
Technique & Brief description & References & Selected \\
\hline\hline
\Name{Individual} & Maintaining arrays \Name{BASE}, \Name{CHECK}, \Name{FAIL}, and \Name{OUTPOS} individually & Conventional & \\ \hline
\Name{Packed} & Arranging values in the arrays cache-efficiently & e.g., \cite{morita2001fast,yata2007compact} & \checkmark{} \\ \hline
\end{tabular}
}\\
\subfloat[Formats of double-arrays in \Name{Bytewise} scheme (Section \ref{sect:tech:format})]{
\begin{tabular}{|c||l|c|c|}
\hline
Technique & Brief description & References & Selected \\
\hline\hline
\Name{Basic} & Implementing \Name{BASE}, \Name{CHECK}, \Name{FAIL}, and \Name{OUTPOS} as arrays of four-byte integers & Conventional & \\ \hline
\Name{Compact} & Compressing \Name{CHECK} into a byte array  & \cite{yata2007compact} & \checkmark{} \\ \hline
\end{tabular}
}\\
\subfloat[Approaches to accelerate vacant searches (Section \ref{sect:tech:constr})]{
\begin{tabular}{|c||l|c|c|}
\hline
Technique & Brief description & References & Selected \\
\hline\hline
\Name{Chain} & Visiting only vacant ids with a linked list & \cite{morita2001fast} & \\ \hline
\Name{SkipForward} & Searching for only the last $L$ blocks & \cite{morita2001fast,dartsclone} & \checkmark{} \\ \hline
\Name{SkipDense} & Searching for only blocks in which the vacant proportion is no less than $\tau$ & This study & \\ \hline
\end{tabular}
}\\
\subfloat[Traversal orders in construction (Section \ref{sect:tech:order})]{
\begin{tabular}{|c||l|c|c|}
\hline
Technique & Brief description & References & Selected \\
\hline\hline
\Name{LexBFS} & Visiting states with breadth-first search and transition labels in the lexicographical order & Conventional & \\ \hline
\Name{FreqBFS} & Visiting states with breadth-first search and transition labels in the frequency order & Conventional & \\ \hline
\Name{LexDFS} & Visiting states with depth-first search and transition labels in the lexicographical order & Conventional & \checkmark{} \\ \hline
\Name{FreqDFS} & Visiting states with depth-first search and transition labels in the frequency order & Conventional & \\ \hline
\end{tabular}
}
\end{table}

\section{Experimental analyses of DAAC techniques}
\label{sect:exp}

\subsection{Setup}

We conducted all experiments on one core of a hexa-core Intel i7-8086K CPU clocked at 2.80 GHz in a machine with 64 GB of RAM (L1 cache: 32 KiB, L2 cache: 256 KiB, L3 cache: 12 MiB), running the 64-bit version of CentOS 7.5 based on Linux 3.10. 
All data structures were implemented in Rust.
We compiled the source code by rustc 1.60.0 with optimization flag \texttt{opt-level=3}.

\paragraph{Datasets}

We use three real-world natural language datasets:

\begin{itemize}
\item \Name{EnWord} is English word uni-grams in the Google Web 1T 5-Gram Corpus \cite{brants2006web};
\item \Name{JaWord} is Japanese word uni-grams in the Nihongo Web Corpus 2010 \cite{yata2010nwc}; and
\item \Name{JaChars} is Japanese character uni-, bi-, and tri-grams in the Nihongo Web Corpus 2010 \cite{yata2010nwc}.
\end{itemize}

\begin{table}[btp]
\footnotesize
\centering
\caption{Basic statistics of dictionaries.}
\label{tab:dataset}
\subfloat[Average pattern length in bytes]{
\label{tab:dataset:lenbyte}
\begin{tabular}{|l||rrrr|}
\hline
Dataset & 1K & 10K & 100K & 1M \\
\hline\hline
\Name{EnWord} & 5.0 & 6.6 & 7.1 & 7.5 \\
\Name{JaWord} & 5.7 & 6.7 & 8.7 & 14.5 \\
\Name{JaChars} & 4.9 & 6.3 & 7.3 & 8.0 \\
\hline
\end{tabular}
}
\subfloat[Average pattern length in characters]{
\label{tab:dataset:lenchar}
\begin{tabular}{|l||rrrr|}
\hline
Dataset & 1K & 10K & 100K & 1M \\
\hline\hline
\Name{EnWord} & 5.0 & 6.6 & 7.1 & 7.5 \\
\Name{JaWord} & 1.9 & 2.3 & 3.0 & 5.2 \\
\Name{JaChars} & 1.8 & 2.3 & 2.7 & 2.9 \\
\hline
\end{tabular}
}
\subfloat[Alphabet size in the \Name{Bytewise} scheme]{
\label{tab:dataset:alphbyte}
\begin{tabular}{|l||rrrr|}
\hline
Dataset & 1K & 10K & 100K & 1M \\
\hline\hline
\Name{EnWord} & 226 & 226 & 239 & 239 \\
\Name{JaWord} & 233 & 238 & 239 & 244 \\
\Name{JaChars} & 233 & 233 & 238 & 239 \\
\hline
\end{tabular}
}\\
\subfloat[Alphabet size in the \Name{Charwise} scheme]{
\label{tab:dataset:alphchar}
\begin{tabular}{|l||rrrr|}
\hline
Dataset & 1K & 10K & 100K & 1M \\
\hline\hline
\Name{EnWord} & 8,226 & 9,632 & 63,289 & 65,532 \\
\Name{JaWord} & 39,640 & 57,345 & 64,017 & 1,048,766 \\
\Name{JaChars} & 39,640 & 40,845 & 57,345 & 64,017 \\
\hline
\end{tabular}
}
\subfloat[Mapped alphabet size in the \Name{Mapped} scheme]{
\label{tab:dataset:alphcharm}
\begin{tabular}{|l||rrrr|}
\hline
Dataset & 1K & 10K & 100K & 1M \\
\hline\hline
\Name{EnWord} & 83 & 104 & 177 & 472 \\
\Name{JaWord} & 669 & 2,262 & 4,806 & 10,809 \\
\Name{JaChars} & 483 & 1,648 & 2,990 & 4,773 \\
\hline
\end{tabular}
}
\end{table}

In these datasets, each N-gram has its own frequency.
We extract the most frequent $K$ N-grams from each dataset and produce dictionaries $\mathcal{D}$ of $K$ patterns.
We test $K = 10^3, 10^4, 10^5, 10^6$ to observe scalability.
Table \ref{tab:dataset} lists the basic statistics of the dictionaries.

\begin{table}[btp]
\footnotesize
\centering
\caption{The number of occurrences in pattern matching per sentence.}
\label{tab:dataset:occur}
\begin{tabular}{|l||rrrr|}
\hline
Dataset & 1K & 10K & 100K & 1M \\
\hline\hline
\Name{EnWord} & 47.2 & 93.4 & 126.2 & 150.8 \\
\Name{JaWord} & 25.8 & 38.0 & 46.7 & 49.0 \\
\Name{JaChars} & 36.6 & 54.6 & 69.7 & 79.9 \\
\hline
\end{tabular}
\end{table}

We evaluate the running times of pattern matching and the construction and memory usage of an AC automaton.
We use an English text from the Pizza\&Chili Corpus \cite{pctext} and a Japanese text from 13 different subcorpora in the Balanced Corpus of Contemporary Written Japanese (BCCWJ, version 1.1) \cite{maekawa2014bccwj} to measure pattern matching times for the English and Japanese dictionaries, respectively.
We randomly sample one million sentences (separated by lines) from each text and produce a sequence of search texts $T$.
Table \ref{tab:dataset:occur} shows the number of occurrences during a pattern matching for each dataset.
Throughout the experiments, we report the total running time of a pattern matching for all sentences, averaged on ten runs.

\subsection{Analysis on approaches to store output sets}

Table \ref{tab:output} shows the experimental results on the \Name{Simple}, \Name{Shared}, and \Name{Forest} approaches presented in Section \ref{sect:tech:out}, while fixing the other settings to \Name{Bytewise}, \Name{Packed}, \Name{Basic}, \Name{Chain}, and \Name{LexDFS}.

\begin{table}[btp]
\footnotesize
\centering
\caption{
Experimental results for approaches to store output sets: \Name{Simple}, \Name{Shared}, and \Name{Forest}.
For memory usage, the total of \Name{OUTPUT} and \Name{TERM} is reported in \Name{Simple} and \Name{Shared}, and the total of \Name{OUTPUT} and \Name{PARENT} is reported in \Name{Forest}.
The last two rows in Tables (a)--(c) show the different ratios for each result.
}
\label{tab:output}
\subfloat[Length of \Name{OUTPUT} ($\times 10^3$)]{
\label{tab:output:length}
\begin{tabular}{|l||rrrr|rrrr|rrrr|}
\hline
 & \multicolumn{4}{c|}{\Name{EnWord}} & \multicolumn{4}{c|}{\Name{JaWord}} & \multicolumn{4}{c|}{\Name{JaChars}} \\
Approach & 1K & 10K & 100K & 1M & 1K & 10K & 100K & 1M & 1K & 10K & 100K & 1M \\
\hline\hline
\Name{Simple} & 1.60 & 29.2 & 412 & 4,988 & 1.46 & 18.7 & 247 & 3,615 & 1.68 & 21.6 & 252 & 2,779 \\
\Name{Shared} & 1.46 & 26.2 & 362 & 4,377 & 1.31 & 16.7 & 218 & 3,132 & 1.40 & 17.7 & 217 & 2,497 \\
\Name{Forest} & 1.00 & 10.0 & 100 & 1,000 & 1.00 & 10.0 & 100 & 1,000 & 1.00 & 10.0 & 100 & 1,000 \\
\hline
\Name{Shared}/\Name{Simple} & 0.92 & 0.90 & 0.88 & 0.88 & 0.89 & 0.89 & 0.88 & 0.87 & 0.83 & 0.82 & 0.86 & 0.90 \\
\Name{Forest}/\Name{Simple} & 0.63 & 0.34 & 0.24 & 0.20 & 0.68 & 0.53 & 0.40 & 0.28 & 0.60 & 0.46 & 0.40 & 0.36 \\
\hline
\end{tabular}
}\\
\subfloat[Memory usage (KiB)]{
\label{tab:output:memory}
\begin{tabular}{|l||rrrr|rrrr|rrrr|}
\hline
 & \multicolumn{4}{c|}{\Name{EnWord}} & \multicolumn{4}{c|}{\Name{JaWord}} & \multicolumn{4}{c|}{\Name{JaChars}} \\
Approach & 1K & 10K & 100K & 1M & 1K & 10K & 100K & 1M & 1K & 10K & 100K & 1M \\
\hline\hline
\Name{Simple} & 12.5 & 228 & 3,219 & 38,968 & 11.4 & 146 & 1,932 & 28,244 & 13.1 & 168 & 1,971 & 21,707 \\
\Name{Shared} & 11.4 & 205 & 2,830 & 34,192 & 10.2 & 131 & 1,703 & 24,473 & 10.9 & 139 & 1,692 & 19,507 \\
\Name{Forest} & 11.7 & 117 & 1,172 & 11,719 & 11.7 & 117 & 1,172 & 11,719 & 11.7 & 117 & 1,172 & 11,719 \\
\hline
\Name{Shared}/\Name{Simple} & 0.92 & 0.90 & 0.88 & 0.88 & 0.89 & 0.89 & 0.88 & 0.87 & 0.83 & 0.82 & 0.86 & 0.90 \\
\Name{Forest}/\Name{Simple} & 0.94 & 0.51 & 0.36 & 0.30 & 1.02 & 0.80 & 0.61 & 0.41 & 0.89 & 0.70 & 0.59 & 0.54 \\
\hline
\end{tabular}
}\\
\subfloat[Matching time (ms)]{
\label{tab:output:search}
\begin{tabular}{|l||rrrr|rrrr|rrrr|}
\hline
 & \multicolumn{4}{c|}{\Name{EnWord}} & \multicolumn{4}{c|}{\Name{JaWord}} & \multicolumn{4}{c|}{\Name{JaChars}} \\
Approach & 1K & 10K & 100K & 1M & 1K & 10K & 100K & 1M & 1K & 10K & 100K & 1M \\
\hline\hline
\Name{Simple} & 525 & 609 & 693 & 1,088 & 469 & 646 & 919 & 1,266 & 448 & 700 & 1,073 & 2,853 \\
\Name{Shared} & 525 & 613 & 719 & 1,148 & 457 & 653 & 923 & 1,304 & 415 & 668 & 1,044 & 2,822 \\
\Name{Forest} & 533 & 596 & 697 & 1,088 & 446 & 626 & 871 & 1,259 & 427 & 663 & 1,027 & 2,796 \\
\hline
\Name{Shared}/\Name{Simple} & 1.00 & 1.01 & 1.04 & 1.06 & 0.98 & 1.01 & 1.00 & 1.03 & 0.92 & 0.95 & 0.97 & 0.99 \\
\Name{Forest}/\Name{Simple} & 1.01 & 0.98 & 1.00 & 1.00 & 0.95 & 0.97 & 0.95 & 0.99 & 0.95 & 0.95 & 0.96 & 0.98 \\
\hline
\end{tabular}
}\\
\subfloat[Average cardinality of $h(s)$]{
\label{tab:output:card}
\begin{tabular}{|l||rrrr|}
\hline
Dataset & 1K & 10K & 100K & 1M \\
\hline\hline
\Name{EnWord} & 1.37 & 2.07 & 3.16 & 4.15 \\
\Name{JaWord} & 1.42 & 1.76 & 2.20 & 3.02 \\
\Name{JaChars} & 1.63 & 2.06 & 2.42 & 2.73 \\
\hline
\end{tabular}
}
\end{table}

We first focus on the statistics related to memory efficiency.
Table \ref{tab:output:length} shows the length of \Name{OUTPUT}, i.e., the number of pattern indices stored in the data structure.
The length in \Name{Forest} is essentially the same as the number of patterns and is much smaller than that in \Name{Simple}, indicating that \Name{Simple} maintains many duplicate values in \Name{OUTPUT}.
\Name{Shared} also reduces such duplicate values in \Name{Simple}.
Comparing \Name{Shared} and \Name{Forest} on their reduction ratios from \Name{Simple}, the difference becomes larger as the number of patterns increases, and \Name{Forest} is more memory efficient for large dictionaries.
Table \ref{tab:output:memory} shows the memory usage of the data structures.
\Name{Shared} is the smallest when the number of patterns is 1K, and \Name{Forest} is the smallest in the other cases.

We next focus on the statistics related to the time efficiency of the AC algorithm.
Table \ref{tab:output:search} reports the elapsed time during a pattern matching, and as we can see, there is no significant difference between the approaches.
These results indicate that the cache-efficient scanning in \Name{Simple} and \Name{Shared} is unimportant.
To clarify this, Table \ref{tab:output:card} shows the average cardinality of $h(s)$ for each dictionary.
These results imply that only a few random accesses can arise in \Name{Forest}'s extraction and do not create a bottleneck.

\subsection{Analysis on byte-wise and character-wise schemes}
\label{sect:exp:wise}

We evaluate the performances of the \Name{Bytewise}, \Name{Charwise}, and \Name{Mapped} schemes presented in Section \ref{sect:tech:wise}, while fixing the other settings to \Name{Forest}, \Name{Packed}, \Name{Basic}, \Name{Chain}, and \Name{LexDFS}.

\begin{table}[btp]
\footnotesize
\centering
\caption{
Experimental results on memory efficiency in the \Name{Bytewise}, \Name{Charwise}, and \Name{Mapped} schemes.
The memory usage is the total of \Name{BASE}, \Name{CHECK}, \Name{FAIL}, and \Name{OUTPOS} (and the mapping $\pi$ in \Name{Mapped}).
The last rows in Tables (a) and (d) show the different ratios for each result.
}
\label{tab:scheme:stat}
\subfloat[Number of original states ($\times 10^3$)]{
\label{tab:scheme:states}
\begin{tabular}{|l||rrrr|rrrr|rrrr|}
\hline
 & \multicolumn{4}{c|}{\Name{EnWord}} & \multicolumn{4}{c|}{\Name{JaWord}} & \multicolumn{4}{c|}{\Name{JaChars}} \\
Scheme & 1K & 10K & 100K & 1M & 1K & 10K & 100K & 1M & 1K & 10K & 100K & 1M \\
\hline\hline
\Name{Bytewise} & 2.74 & 25.6 & 237 & 2,129 & 2.96 & 27.9 & 305 & 4,899 & 1.73 & 18.5 & 186 & 1,790 \\
\Name{Charwise} & 2.74 & 25.6 & 237 & 2,124 & 1.43 & 13.0 & 142 & 2,155 & 1.12 & 10.9 & 107 & 1,052 \\
\hline
\Name{Charwise}/\Name{Bytewise} & 1.00 & 1.00 & 1.00 & 1.00 & 0.48 & 0.47 & 0.46 & 0.44 & 0.65 & 0.59 & 0.58 & 0.59 \\
\hline
\end{tabular}
}\\
\subfloat[Proportion of vacant ids (\%)]{
\label{tab:scheme:vacant}
\begin{tabular}{|l||rrrr|rrrr|rrrr|}
\hline
 & \multicolumn{4}{c|}{\Name{EnWord}} & \multicolumn{4}{c|}{\Name{JaWord}} & \multicolumn{4}{c|}{\Name{JaChars}} \\
Scheme & 1K & 10K & 100K & 1M & 1K & 10K & 100K & 1M & 1K & 10K & 100K & 1M \\
\hline\hline
\Name{Bytewise} & 2.6 & 0.9 & 0.1 & 0.0 & 3.7 & 0.8 & 0.2 & 0.1 & 15.3 & 0.8 & 0.2 & 0.0 \\
\Name{Charwise} & 66.6 & 0.2 & 0.0 & 0.0 & 96.4 & 77.3 & 16.3 & 0.7 & 97.2 & 84.1 & 68.0 & 71.8 \\
\Name{Mapped} & 2.7 & 0.9 & 0.2 & 0.0 & 30.2 & 20.5 & 3.9 & 1.1 & 27.1 & 24.1 & 12.6 & 27.0 \\
\hline
\end{tabular}
}\\
\subfloat[Average number of outgoing transitions for an internal state.]{
\label{tab:scheme:outgoing}
\begin{tabular}{|l||rrrr|rrrr|rrrr|}
\hline
 & \multicolumn{4}{c|}{\Name{EnWord}} & \multicolumn{4}{c|}{\Name{JaWord}} & \multicolumn{4}{c|}{\Name{JaChars}} \\
Scheme & 1K & 10K & 100K & 1M & 1K & 10K & 100K & 1M & 1K & 10K & 100K & 1M \\
\hline\hline
\Name{Bytewise} & 1.41 & 1.40 & 1.44 & 1.50 & 1.41 & 1.43 & 1.37 & 1.20 & 1.87 & 1.68 & 1.75 & 1.91 \\
\Name{Charwise} & 1.41 & 1.41 & 1.44 & 1.50 & 2.54 & 2.79 & 2.40 & 1.60 & 3.58 & 3.25 & 3.83 & 5.33 \\
\hline
\end{tabular}
}\\
\subfloat[Memory usage (KiB)]{
\label{tab:scheme:memory}
\begin{tabular}{|l||rrrr|rrrr|rrrr|}
\hline
 & \multicolumn{4}{c|}{\Name{EnWord}} & \multicolumn{4}{c|}{\Name{JaWord}} & \multicolumn{4}{c|}{\Name{JaChars}} \\
Scheme & 1K & 10K & 100K & 1M & 1K & 10K & 100K & 1M & 1K & 10K & 100K & 1M \\
\hline\hline
\Name{Bytewise} & 44 & 404 & 3,712 & 33,276 & 48 & 440 & 4,780 & 76,596 & 32 & 292 & 2,908 & 27,972 \\
\Name{Charwise} & 128 & 401 & 3,707 & 33,193 & 619 & 896 & 2,646 & 33,922 & 619 & 1,067 & 5,240 & 58,318 \\
\Name{Mapped} & 76 & 442 & 3,959 & 33,456 & 187 & 480 & 2,554 & 38,145 & 179 & 384 & 2,144 & 22,778 \\
\hline
\Name{Charwise}/\Name{Bytewise} & 2.91 & 0.99 & 1.00 & 1.00 & 12.89 & 2.04 & 0.55 & 0.44 & 19.34 & 3.65 & 1.80 & 2.08 \\
\Name{Mapped}/\Name{Bytewise} & 1.73 & 1.09 & 1.07 & 1.01 & 3.89 & 1.09 & 0.53 & 0.50 & 5.59 & 1.31 & 0.74 & 0.81 \\
\Name{Mapped}/\Name{Charwise} & 0.59 & 1.10 & 1.07 & 1.01 & 0.30 & 0.54 & 0.97 & 1.12 & 0.29 & 0.36 & 0.41 & 0.39 \\
\hline
\end{tabular}
}
\end{table}

Table \ref{tab:scheme:stat} shows the results on memory efficiency,
where Table \ref{tab:scheme:states} lists the number of original states (i.e., $|S|$) and
Table \ref{tab:scheme:vacant} reports the proportion of vacant ids (i.e., $|S|/|S_{\Name{BC}}|$), which we call \emph{vacant proportion}.
On \Name{JaWord} and \Name{JaChars}, whose strings consist of multibyte characters, \Name{Charwise} defines fewer states than \Name{Bytewise} but more vacant ids.
Especially in \Name{JaChars}, the vacant proportion in \Name{Charwise} is always more than 68\%.
The large vacant proportions are related to the average number of outgoing transitions from an internal state, as reported in Table \ref{tab:scheme:outgoing}.
The more outgoing transitions there are, the easier it is for vacant searches to fail and the number of vacant ids to increase.
The large vacant proportions of \Name{Charwise}, however, can be improved with the code mapping $\pi$, as the result of \Name{Mapped} demonstrates.

Table \ref{tab:scheme:memory} reports the total memory usage of \Name{BASE}, \Name{CHECK}, \Name{FAIL}, and \Name{OUTPOS} (and the mapping $\pi$ in \Name{Mapped}).
On \Name{EnWord}, there is no significant difference between the schemes when the number of patterns is no less than 10K.
On \Name{JaWord} and \Name{JaChars}, \Name{Mapped} is always the first or second smallest because of its fewer states and smaller vacant proportions.

\begin{table}[btp]
\footnotesize
\centering
\caption{
Experimental results on time efficiency in the \Name{Bytewise}, \Name{Charwise}, and \Name{Mapped} schemes.
The last rows in Tables (a) and (c) show the different ratios for each result.
}
\label{tab:scheme}
\subfloat[Number of visiting states during matching ($\times 10^6$)]{
\label{tab:scheme:trans}
\begin{tabular}{|l||rrrr|rrrr|rrrr|}
\hline
 & \multicolumn{4}{c|}{\Name{EnWord}} & \multicolumn{4}{c|}{\Name{JaWord}} & \multicolumn{4}{c|}{\Name{JaChars}} \\
Scheme & 1K & 10K & 100K & 1M & 1K & 10K & 100K & 1M & 1K & 10K & 100K & 1M \\
\hline\hline
\Name{Bytewise} & 108 & 104 & 107 & 111 & 128 & 128 & 129 & 129 & 129 & 129 & 129 & 131 \\
\Name{Charwise} & 108 & 104 & 107 & 111 & 57 & 61 & 61 & 61 & 59 & 62 & 62 & 62 \\
\hline
\Name{Charwise}/\Name{Bytewise} & 1.00 & 1.00 & 1.00 & 1.00 & 0.44 & 0.48 & 0.47 & 0.47 & 0.45 & 0.48 & 0.48 & 0.48 \\
\hline
\end{tabular}
}\\
\subfloat[Matching time (ms)]{
\label{tab:scheme:search}
\begin{tabular}{|l||rrrr|rrrr|rrrr|}
\hline
 & \multicolumn{4}{c|}{\Name{EnWord}} & \multicolumn{4}{c|}{\Name{JaWord}} & \multicolumn{4}{c|}{\Name{JaChars}} \\
Scheme & 1K & 10K & 100K & 1M & 1K & 10K & 100K & 1M & 1K & 10K & 100K & 1M \\
\hline\hline
\Name{Bytewise} & 527 & 597 & 689 & 1,057 & 454 & 644 & 896 & 1,254 & 406 & 648 & 1,021 & 2,735 \\
\Name{Charwise} & 550 & 628 & 720 & 1,098 & 362 & 444 & 563 & 809 & 356 & 507 & 661 & 1,812 \\
\Name{Mapped} & 576 & 665 & 748 & 1,157 & 310 & 402 & 484 & 731 & 304 & 439 & 620 & 1,403 \\
\hline
\Name{Charwise}/\Name{Bytewise} & 1.04 & 1.05 & 1.04 & 1.04 & 0.80 & 0.69 & 0.63 & 0.65 & 0.88 & 0.78 & 0.65 & 0.66 \\
\Name{Mapped}/\Name{Bytewise} & 1.09 & 1.11 & 1.09 & 1.09 & 0.68 & 0.62 & 0.54 & 0.58 & 0.75 & 0.68 & 0.61 & 0.51 \\
\Name{Mapped}/\Name{Charwise} & 1.05 & 1.06 & 1.04 & 1.05 & 0.86 & 0.91 & 0.86 & 0.90 & 0.85 & 0.87 & 0.94 & 0.77 \\
\hline
\end{tabular}
}\\
\subfloat[Construction time (ms)]{
\label{tab:scheme:build}
\begin{tabular}{|l||rrrr|rrrr|rrrr|}
\hline
 & \multicolumn{4}{c|}{\Name{EnWord}} & \multicolumn{4}{c|}{\Name{JaWord}} & \multicolumn{4}{c|}{\Name{JaChars}} \\
Scheme & 1K & 10K & 100K & 1M & 1K & 10K & 100K & 1M & 1K & 10K & 100K & 1M \\
\hline\hline
\Name{Bytewise} & 0.12 & 1.34 & 21 & 244 & 0.13 & 1.44 & 27 & 387 & 0.09 & 1.2 & 20 & 275 \\
\Name{Charwise} & 0.14 & 0.79 & 18 & 199 & 0.32 & 1.09 & 163 & 1,829 & 0.29 & 2.3 & 134 & 21,184 \\
\Name{Mapped} & 0.09 & 0.81 & 18 & 204 & 0.06 & 1.50 & 102 & 1,251 & 0.08 & 1.4 & 45 & 9,107 \\
\hline
\Name{Charwise}/\Name{Bytewise} & 1.18 & 0.59 & 0.86 & 0.81 & 2.51 & 0.76 & 6.03 & 4.72 & 3.14 & 1.92 & 6.72 & 77.1 \\
\Name{Mapped}/\Name{Bytewise} & 0.80 & 0.60 & 0.86 & 0.83 & 0.44 & 1.04 & 3.77 & 3.23 & 0.83 & 1.22 & 2.25 & 33.1 \\
\Name{Mapped}/\Name{Charwise} & 0.68 & 1.03 & 1.01 & 1.03 & 0.17 & 1.37 & 0.62 & 0.68 & 0.26 & 0.64 & 0.33 & 0.43 \\
\hline
\end{tabular}
}
\end{table}

Table \ref{tab:scheme} shows the experimental results on time efficiency, where
Table \ref{tab:scheme:trans} reports the number of visiting states during matching with $\delta$ and $f$.
On \Name{EnWord}, whose characters are mostly single-byte ones, there is no significant difference between the schemes.
On \Name{JaWord} and \Name{JaChars}, whose strings consist of multibyte characters, \Name{Charwise} (or \Name{Mapped}) achieves half as many as \Name{Bytewise}, resulting in halving the number of random accesses during a pattern matching.
Table \ref{tab:scheme:search} lists the elapsed time during a pattern matching.
On \Name{JaWord} and \Name{JaChars}, \Name{Charwise} and  \Name{Mapped} are faster than \Name{Bytewise} because of their fewer random accesses.
Comparing \Name{Charwise} and \Name{Mapped}, there is no large difference, although \Name{Mapped} requires additional computations for the code mapping $\pi$.

Table \ref{tab:scheme:build} shows the elapsed time to compute two arrays, \Name{BASE} and \Name{CHECK}, from an original trie.
On \Name{EnWord}, \Name{Charwise} and \Name{Mapped} are faster than \Name{Bytewise} in most cases.
On \Name{JaWord} and \Name{JaChars}, \Name{Charwise} is much slower than \Name{Bytewise}, and its time performance is improved by the code mapping of \Name{Mapped};
nevertheless, \Name{Mapped} is still much slower than \Name{Bytewise} for large dictionaries.
The time inefficiency of \Name{Charwise} and \Name{Mapped} is caused by their large vacant proportion (as shown in Table \ref{tab:scheme:vacant}):
\Name{Chain} does not perform well when there are many vacant ids.

\paragraph{Memory layouts of the $\pi$ data structure}

We further investigate the data structure of $\pi$.
In the other experiments, we use a simple four-byte array to implement the $\pi$ data structure.
On the one hand, it can also be implemented with a three-byte array because the number of code points is bounded by $2^{21}$.
Although the three-byte array implementation can save the memory consumption, extracting $\pi(c)$ will take more CPU instructions to convert a byte sequence into a four-byte integer type.

\begin{table}[btp]
\footnotesize
\centering
\caption{
Experimental results on memory efficiency, matching time, and CPU instructions when varying the memory layout of the $\pi$ data structure in the \Name{Mapped} scheme.
\Name{4B} and \Name{3B} indicate the cases that $\pi$ is implemented with four- and three-byte arrays, respectively.
The memory usage is the total of \Name{BASE}, \Name{CHECK}, \Name{FAIL}, \Name{OUTPOS}, and $\pi$.
The last rows in each table show the different ratios for each result.
}
\label{tab:scheme:3B}
\subfloat[Memory usage (KiB)]{
\label{tab:scheme:3B:memory}
\begin{tabular}{|l||rrrr|rrrr|}
\hline
  & \multicolumn{4}{c|}{\Name{JaWord}} & \multicolumn{4}{c|}{\Name{JaChars}} \\
Layout & 1K & 10K & 100K & 1M & 1K & 10K & 100K & 1M \\
\hline\hline
\Name{4B} & 187 & 480 & 2,554 & 38,145 & 179 & 384 & 2,144 & 22,778 \\
\Name{3B} & 148 & 424 & 2,492 & 37,121 & 140 & 344 & 2,088 & 22,716 \\
\hline
\Name{3B}/\Name{4B} & 0.79 & 0.88 & 0.98 & 0.97 & 0.78 & 0.90 & 0.97 & 1.00 \\
\hline
\end{tabular}
}\\
\subfloat[Matching time (ms)]{
\label{tab:scheme:3B:match}
\begin{tabular}{|l||rrrr|rrrr|}
\hline
 & \multicolumn{4}{c|}{\Name{JaWord}} & \multicolumn{4}{c|}{\Name{JaChars}} \\
Layout & 1K & 10K & 100K & 1M & 1K & 10K & 100K & 1M \\
\hline\hline
\Name{4B} & 310 & 402 & 484 & 731 & 304 & 439 & 620 & 1,403 \\
\Name{3B} & 334 & 430 & 526 & 783 & 320 & 451 & 618 & 1,456 \\
\hline
\Name{3B}/\Name{4B} & 1.08 & 1.07 & 1.09 & 1.07 & 1.06 & 1.03 & 1.00 & 1.04 \\
\hline
\end{tabular}
}\\
\subfloat[CPU instructions ($10^6$)]{
\label{tab:scheme:3B:instructions}
\begin{tabular}{|l||rrrr|rrrr|}
\hline
 & \multicolumn{4}{c|}{\Name{JaWord}} & \multicolumn{4}{c|}{\Name{JaChars}} \\
Layout & 1K & 10K & 100K & 1M & 1K & 10K & 100K & 1M \\
\hline\hline
\Name{4B} & 4,185 & 4,446 & 4,605 & 4,769 & 4,299 & 4,609 & 4,812 & 4,953 \\
\Name{3B} & 4,517 & 4,798 & 4,948 & 5,100 & 4,635 & 4,949 & 5,138 & 5,267 \\
\hline
\Name{3B}/\Name{4B} & 1.08 & 1.08 & 1.07 & 1.07 & 1.08 & 1.07 & 1.07 & 1.06 \\
\hline
\end{tabular}
}
\end{table}

Table \ref{tab:scheme:3B} shows the comparison results on \Name{JaWord} and \Name{JaChars}, where \Name{Mapped} has better performance than \Name{Bytewise}.
\Name{4B} and \Name{3B} indicate the cases that $\pi$ is implemented with four- and three-byte arrays, respectively.
As shown in Table \ref{tab:scheme:3B:memory}, the memory usage of \Name{3B} is necessarily smaller;
The smaller the number of patterns, the larger the difference.
This is because the size of $\pi$ is not linear to the number of patterns, although those of the other parts are linear.
As shown in Tables \ref{tab:scheme:3B:match}, however, the matching time of \Name{3B} is always slower.
Table \ref{tab:scheme:3B:instructions} shows the number of CPU instructions occurring during a pattern matching (measured with the \texttt{perf} command).
In \Name{3B}, converting three bytes into a four-byte integer in extraction of $\pi(c)$ requires more CPU instructions, resulting in slower matching.
These results suggest that when the dictionary scale is small, the 3B layout can reduce memory consumption while slightly sacrificing matching speed.

\subsection{Analysis on memory layouts}
\label{sect:exp:layout}

\begin{figure}[btp]
\centering
\subfloat[Matching time (ms)]{
\label{fig:layout:match:time}
\includegraphics[scale=0.5]{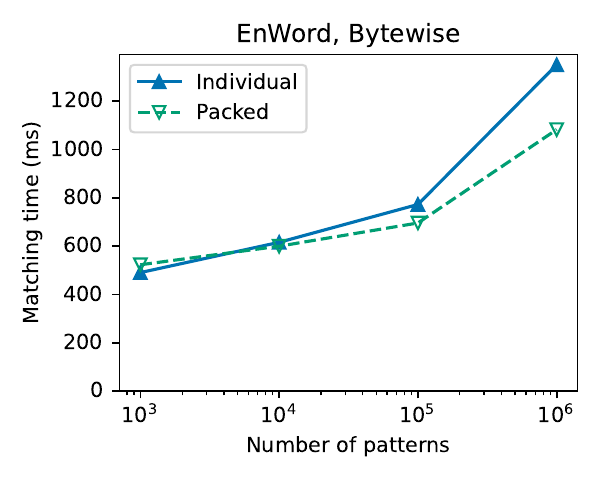}
\includegraphics[scale=0.5]{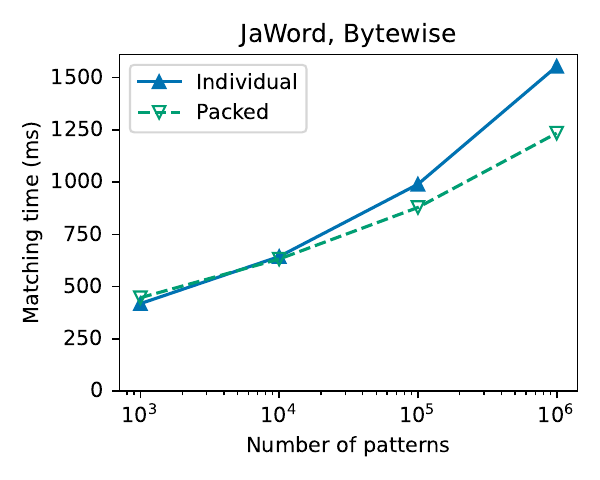}
\includegraphics[scale=0.5]{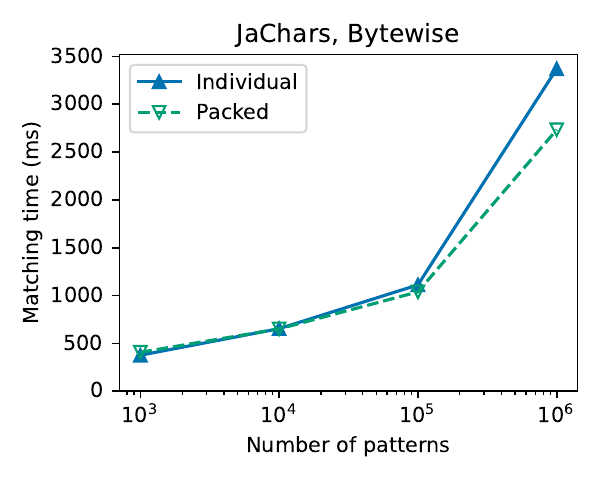}
}\\
\subfloat[Number of L1-load misses occurring during a pattern matching]{
\label{fig:layout:match:l1-cache}
\includegraphics[scale=0.5]{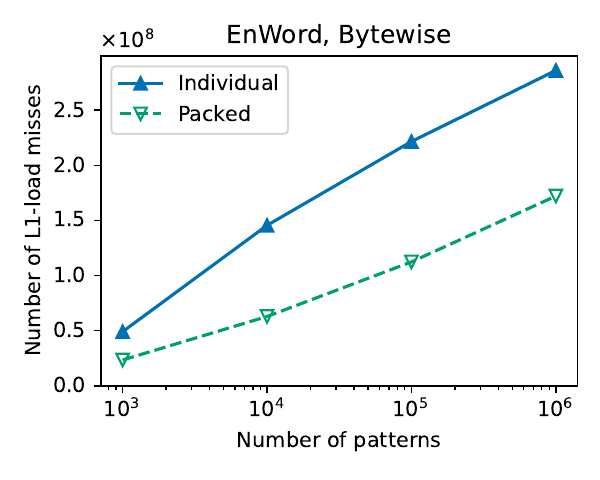}
\includegraphics[scale=0.5]{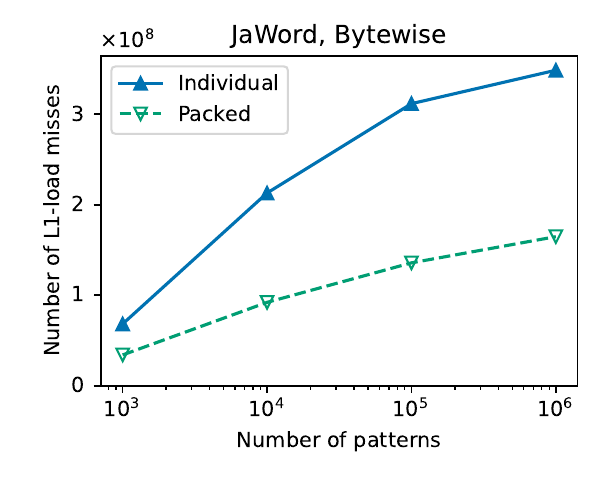}
\includegraphics[scale=0.5]{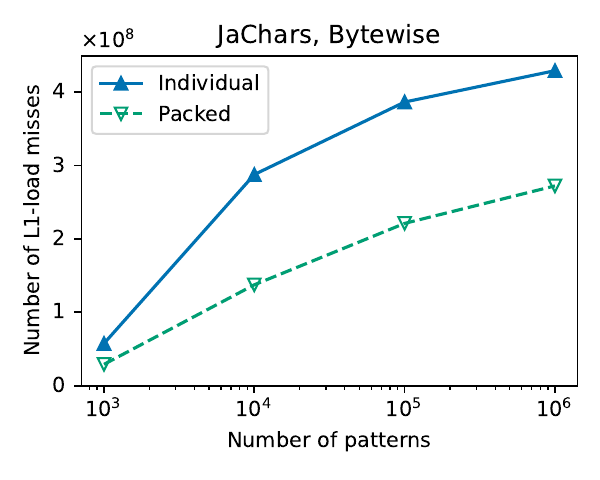}
}\\
\subfloat[Number of L3-load misses occurring during a pattern matching]{
\label{fig:layout:match:l3-cache}
\includegraphics[scale=0.5]{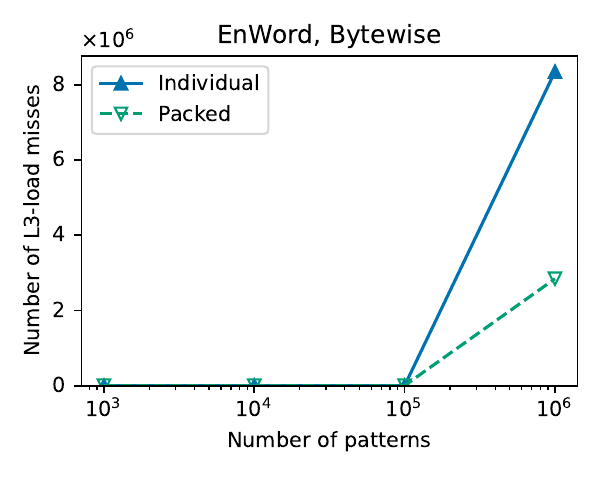}
\includegraphics[scale=0.5]{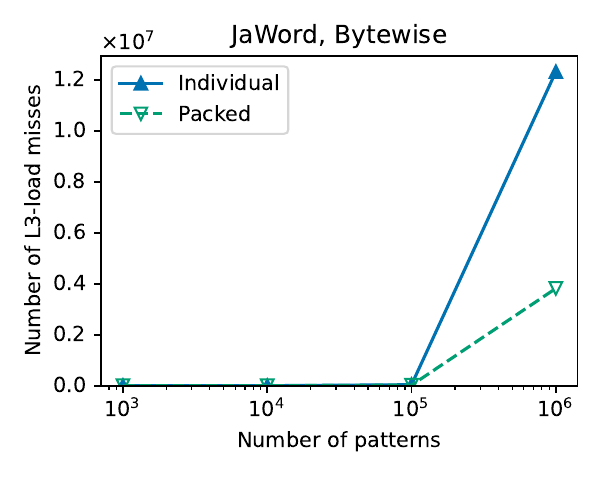}
\includegraphics[scale=0.5]{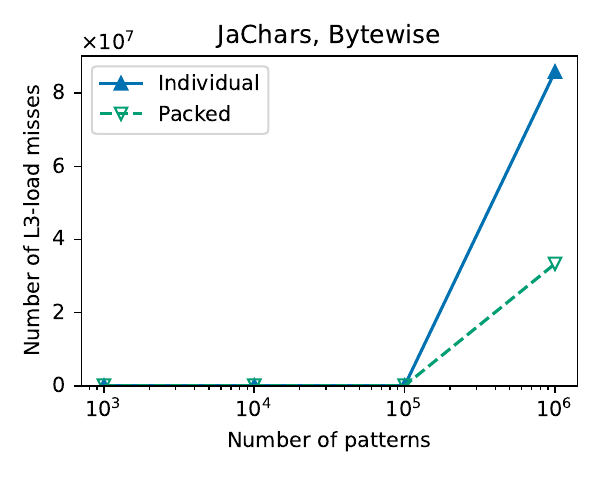}
}
\caption{
Comparison results on matching time and cache efficiency for the \Name{Individual} and \Name{Packed} layouts.
}
\label{fig:layout:match}
\end{figure}

Figure \ref{fig:layout:match} shows the experimental results for the \Name{Individual} and \Name{Packed} layouts presented in Section \ref{sect:tech:layout}, while fixing the other settings to \Name{Forest}, \Name{Bytewise}, \Name{Basic}, \Name{Chain}, and \Name{LexDFS}.
Figure \ref{fig:layout:match:time} shows the matching time.
The greater the number of patterns, the faster \Name{Packed} becomes.
\Name{Packed} is $\approx$20\% faster than \Name{Individual} when the number of patterns is $10^6$.

The time efficiency of \Name{Packed} stems from its cache-efficiency.
Figures \ref{fig:layout:match:l1-cache} and \ref{fig:layout:match:l3-cache} show the numbers of L1- and L3-load misses occurring during a pattern matching, respectively, measured using the \texttt{perf} command.
The number of L1-load misses in \Name{Packed} is always smaller than that in \Name{Individual}.
When the number of patterns is no greater than $10^5$, L3-load misses infrequently occur because the data structure fits in the cache memory.
However, when the number of patterns is $10^6$, \Name{Packed} suffered only 31--39\% of the L3-load misses experienced by \Name{Individual}.
From these results, we observe that the cache-efficiency with \Name{Packed} provides faster matching, especially for large dictionaries that cause costly L3-load misses.

\begin{figure}[btp]
\centering
\subfloat[Number of CPU instructions occurring during a pattern matching]{
\label{fig:layout:other:instructions}
\includegraphics[scale=0.5]{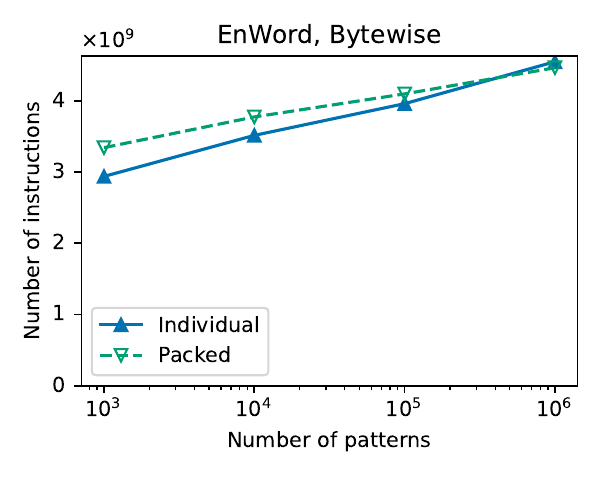}
\includegraphics[scale=0.5]{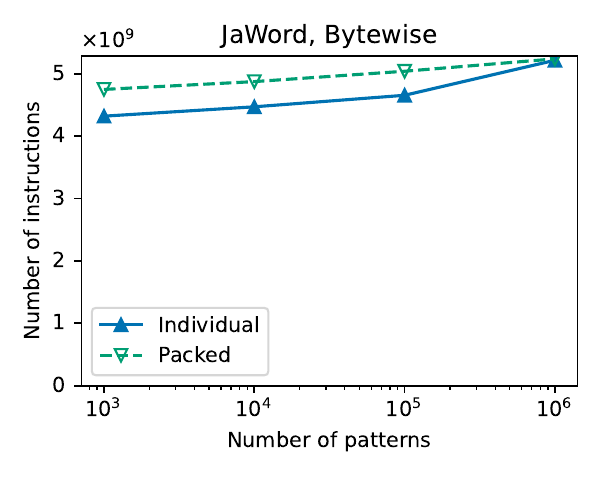}
\includegraphics[scale=0.5]{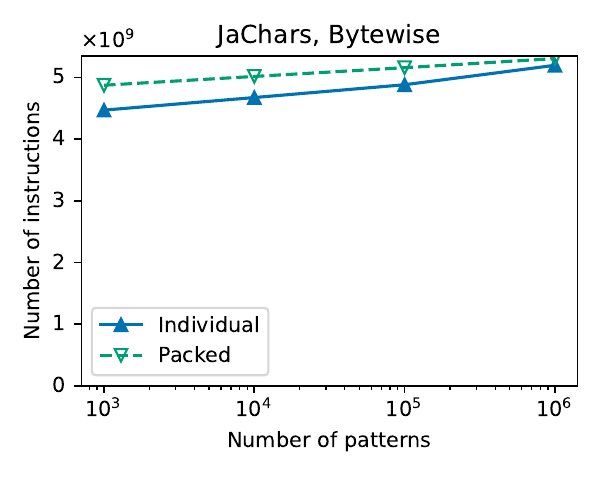}
}\\
\subfloat[Number of branch misses occurring during a pattern matching]{
\label{fig:layout:other:branch_miss}
\includegraphics[scale=0.5]{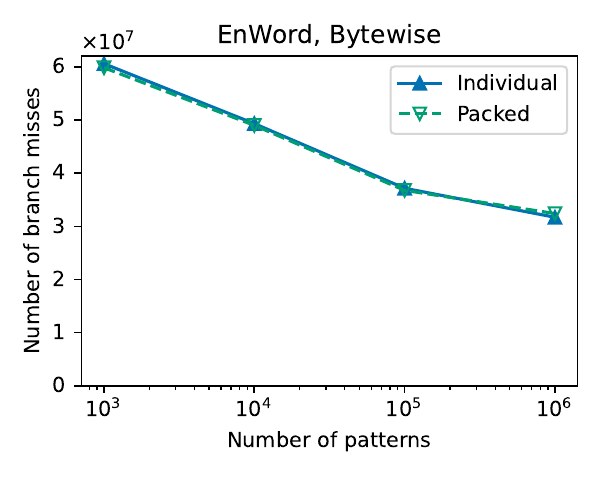}
\includegraphics[scale=0.5]{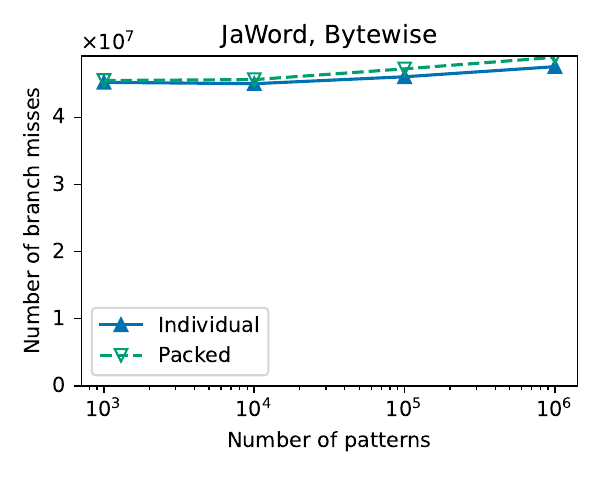}
\includegraphics[scale=0.5]{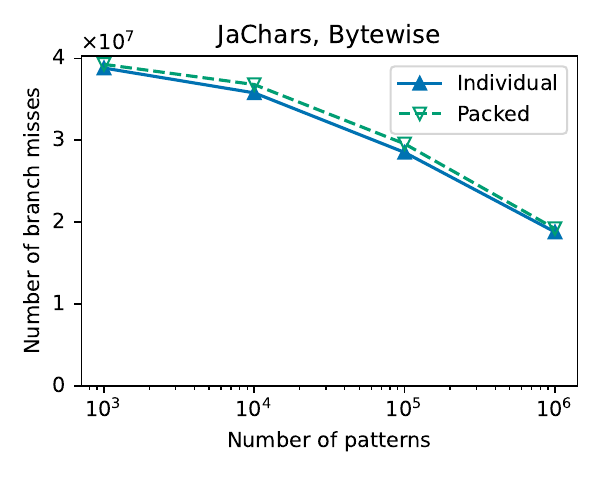}
}
\caption{
Comparison results on CPU instructions and branch misses for the \Name{Individual} and \Name{Packed} layouts.
}
\label{fig:layout:other}
\end{figure}

To reinforce our observation, we measured other metrics related to time performance.
Figures \ref{fig:layout:other:instructions} and \ref{fig:layout:other:branch_miss} show the numbers of CPU instructions and branch misses occurring during a pattern matching, respectively.
Since the only difference between \Name{Packed} and \Name{Individual} is the memory layout, these numbers do not differ significantly.
Rather, the number of CPU instructions in \Name{Packed} is slightly smaller (maybe because the data structure is implemented in a simpler manner without \texttt{struct}).
These results indicate that CPU instructions and branch misses are not really relevant to the comparison results in Figure \ref{fig:layout:match:time}.

\begin{table}[btp]
\footnotesize
\centering
\caption{
Proportion of the number of L3-load misses over that of L1 loads when the number of patterns is $10^6$.
}
\label{tab:layout:ratio}
\begin{tabular}{|l||rrr|}
\hline
Layout & \Name{EnWord} & \Name{JaWord} & \Name{JaChars} \\
\hline\hline
\Name{Individual} & 0.7\% & 1.0\% & 7.3\% \\
\Name{Packed} & 0.3\% & 0.4\% & 3.2\% \\
\hline
\end{tabular}
\end{table}

As a supplementary experiment, we also measured the proportion of the number of L3-load misses over that of L1 loads when the number of patterns is $10^6$.
Table \ref{tab:layout:ratio} shows the results.
The proportions in \Name{JaChars} are larger than those in \Name{EnWord} and \Name{JaWord}.
These results would be related to why matching in \Name{JaChars} takes more time in Figure \ref{fig:layout:match:time}:
\Name{Packed} takes 1.0--1.2 sec in \Name{EnWord} and \Name{JaWord} while it takes 2.7 sec in \Name{JaChars}, nevertheless CPU instructions and branch misses in \Name{JaChars} are not larger.
This suggests that it is important to further improve cache-efficiency to achieve faster matching, which would be helpful for future work.

\subsection{Analysis on array formats}
\label{sect:exp:format}

Table \ref{tab:format} shows the experimental results for the \Name{Basic} and \Name{Compact} formats presented in Section \ref{sect:tech:format}, while fixing the other settings to \Name{Forest}, \Name{Bytewise}, \Name{Packed}, \Name{Chain}, and \Name{LexDFS}.

\begin{table}[btp]
\footnotesize
\centering
\caption{
Experimental results for the \Name{Basic} and \Name{Compact} formats.
The memory usage is the total of \Name{BASE}, \Name{CHECK}, \Name{FAIL}, and \Name{OUTPOS}.
The last row in each table shows the different ratios for each result.
}
\label{tab:format}
\subfloat[Memory usage (KiB)]{
\label{tab:format:memory}
\begin{tabular}{|l||rrrr|rrrr|rrrr|}
\hline
 & \multicolumn{4}{c|}{\Name{EnWord}} & \multicolumn{4}{c|}{\Name{JaWord}} & \multicolumn{4}{c|}{\Name{JaChars}} \\
Format & 1K & 10K & 100K & 1M & 1K & 10K & 100K & 1M & 1K & 10K & 100K & 1M \\
\hline\hline
\Name{Basic} & 44.0 & 404 & 3,712 & 33,276 & 48.0 & 440 & 4,780 & 76,544 & 32.0 & 292 & 2,908 & 27,972 \\
\Name{Compact} & 33.0 & 303 & 2,784 & 24,957 & 36.0 & 330 & 3,585 & 57,408 & 24.0 & 219 & 2,181 & 20,979 \\
\hline
\Name{Compact}/\Name{Basic} & 0.75 & 0.75 & 0.75 & 0.75 & 0.75 & 0.75 & 0.75 & 0.75 & 0.75 & 0.75 & 0.75 & 0.75 \\
\hline
\end{tabular}
}\\
\subfloat[Matching time (ms)]{
\label{tab:format:match}
\begin{tabular}{|l||rrrr|rrrr|rrrr|}
\hline
 & \multicolumn{4}{c|}{\Name{EnWord}} & \multicolumn{4}{c|}{\Name{JaWord}} & \multicolumn{4}{c|}{\Name{JaChars}} \\
Format & 1K & 10K & 100K & 1M & 1K & 10K & 100K & 1M & 1K & 10K & 100K & 1M \\
\hline\hline
\Name{Basic} & 523 & 600 & 696 & 1,082 & 447 & 632 & 879 & 1,234 & 406 & 651 & 1,038 & 2,733 \\
\Name{Compact} & 545 & 606 & 712 & 1,052 & 473 & 633 & 893 & 1,226 & 415 & 609 & 1,028 & 2,439 \\
\hline
\Name{Compact}/\Name{Basic} & 1.04 & 1.01 & 1.02 & 0.97 & 1.06 & 1.00 & 1.02 & 0.99 & 1.02 & 0.94 & 0.99 & 0.89 \\
\hline
\end{tabular}
}\\
\subfloat[Construction time (ms)]{
\label{tab:format:build}
\begin{tabular}{|l||rrrr|rrrr|rrrr|}
\hline
 & \multicolumn{4}{c|}{\Name{EnWord}} & \multicolumn{4}{c|}{\Name{JaWord}} & \multicolumn{4}{c|}{\Name{JaChars}} \\
Format & 1K & 10K & 100K & 1M & 1K & 10K & 100K & 1M & 1K & 10K & 100K & 1M \\
\hline\hline
\Name{Basic} & 0.11 & 1.06 & 19.2 & 207 & 0.13 & 1.19 & 22.4 & 331 & 0.08 & 0.81 & 14.8 & 200 \\
\Name{Compact} & 0.14 & 1.30 & 23.5 & 249 & 0.14 & 1.39 & 32.7 & 182,015 & 0.09 & 0.91 & 16.0 & 226 \\
\hline
\Name{Compact}/\Name{Basic} & 1.25 & 1.23 & 1.22 & 1.20 & 1.09 & 1.17 & 1.46 & 549 & 1.09 & 1.12 & 1.08 & 1.13 \\
\hline
\end{tabular}
}
\end{table}

Table \ref{tab:format:memory} reports the total memory usage of \Name{BASE}, \Name{CHECK}, \Name{FAIL}, and \Name{OUTPOS}.
In all cases, the memory usage in \Name{Compact} is 75\% of that in \Name{Basic} because there is no difference in the resulting number of vacant ids (although Equation \eqref{eq:cda:base} has to be satisfied).
There is no significant difference in the matching time, as Table \ref{tab:format:match} shows.

Table \ref{tab:format:build} reports the construction time.
\Name{Compact} is always slower because of Equation \eqref{eq:cda:base},
especially on \Name{JaWord} of 1M patterns.
To clarify the slowdown, we investigated vacant ids for which vacant searches were mostly unsuccessful and found that the top-five vacant ids could only accept the five transition labels {0xC0}, {0xCF}, {0xD0}, {0xD1}, and {0xFF} to satisfy Equation \eqref{eq:cda:base}.
Because these labels did not appear or were very few in the dictionary, vacant searches for the vacant ids were unsuccessful many times, resulting in the slow construction.
Consequently, while the \Name{Compact} format does not produce a particularly high number of vacant ids, some of them can significantly slow down vacant searches when using \Name{Chain}.

\subsection{Analysis on acceleration techniques for vacant searches}
\label{sect:exp:constr}

\begin{table}[btp]
\footnotesize
\centering
\caption{
The average number of verifications per vacant search when using \Name{Chain}.
The number of patterns in each dictionary is one million.
}
\label{tab:constr:verif}
\begin{tabular}{|l||rrr|}
\hline
Method & \Name{EnWord} & \Name{JaWord} & \Name{JaChars} \\
\hline\hline
\Name{Bytewise} + \Name{Basic} & 2.8 & 2.8 & 6.6 \\
\Name{Bytewise} + \Name{Compact} & 4.6 & 5383.3 & 7.4 \\
\Name{Mapped} & 2.5 & 84.4 & 4934.5 \\
\hline
\end{tabular}
\end{table}

\begin{figure}[btp]
\centering
\subfloat[The average number of verifications per vacant search]{
\label{fig:exp:constr:verif}
\includegraphics[scale=0.6]{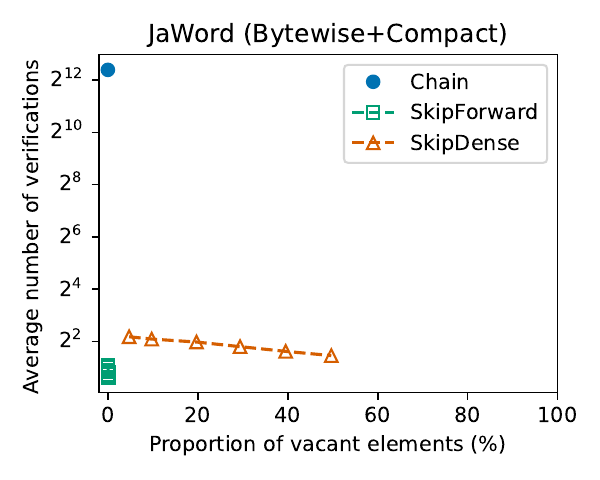}
\includegraphics[scale=0.6]{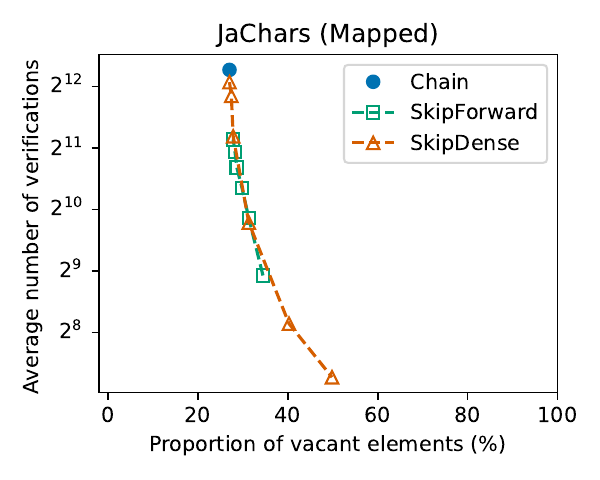}
}\\
\subfloat[Construction time in seconds]{
\label{fig:exp:constr:time}
\includegraphics[scale=0.6]{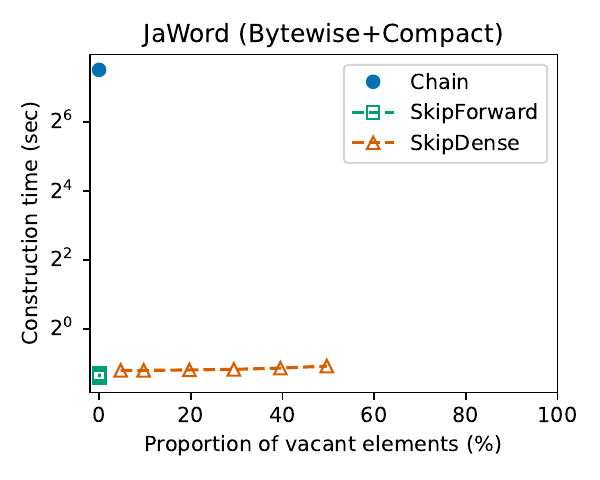}
\includegraphics[scale=0.6]{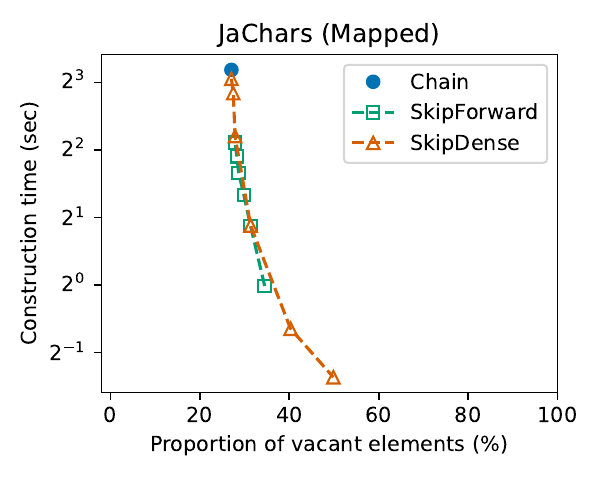}
}
\caption{
Experimental results for verifications and construction times when using \Name{Chain}, \Name{SkipForward}, or \Name{SkipDense}.
}
\label{fig:exp:constr}
\end{figure}

We first investigate problematic cases in DAAC construction when using \Name{Chain}.
Table \ref{tab:constr:verif} shows the average number of verifications per vacant search when using \Name{Chain}.
In the combination of \Name{Bytewise} and \Name{Basic}, the number is always several, indicating that it is hard to improve vacant searches further even using \Name{SkipForward} or \Name{SkipDense}.
However, the number is significantly higher in the cases of (i) \Name{Compact} on \Name{JaWord} and (ii) \Name{Mapped} on \Name{JaChars}.
Case (i) is because of Equation \eqref{eq:cda:base}, as discussed in Section \ref{sect:exp:format}.
Case (ii) is because of many vacant ids, as discussed in Section \ref{sect:exp:wise}.

We next test \Name{SkipForward} and \Name{SkipDense} to improve the two cases.
We test parameters $L = 4, 8, 12, 16, 20, 24$ in \Name{SkipForward} and $\tau = 0.05, 0.1, 0.2, 0.3, 0.4, 0.5$ in \Name{SkipDense}.
Figure \ref{fig:exp:constr:verif} shows the experimental results for the average number of verifications per vacant search.
The left figure corresponds to case (i), where we can see that
\Name{SkipForward} achieves several verifications with any parameter $L$ while maintaining similar vacant proportions.
\Name{SkipDense} also achieves several verifications, although the vacant proportions increase depending on the parameter $\tau$.
In the right figure corresponding to case (ii),
both \Name{SkipForward} and \Name{SkipDense} achieve fewer verifications than \Name{Chain}.
\Name{SkipForward} and \Name{SkipDense} plot similar curves, and their performances have no significant difference.

Figure \ref{fig:exp:constr:time} shows the experimental results for the construction times.
The plots are similar to those in Figure \ref{fig:exp:constr:verif}, and both \Name{SkipForward} and \Name{SkipDense} achieve construction times within several seconds.
These results indicate that we should choose \Name{SkipForward} or \Name{SkipDense} in accordance with the desired purpose:
if we want to control construction times, \Name{SkipForward} is best, and
if we want to control memory usage, \Name{SkipDense} is more suitable.

\subsection{Analysis on traversal orders}
\label{sect:exp:order}

\begin{figure}[btp]
\centering
\includegraphics[scale=0.5]{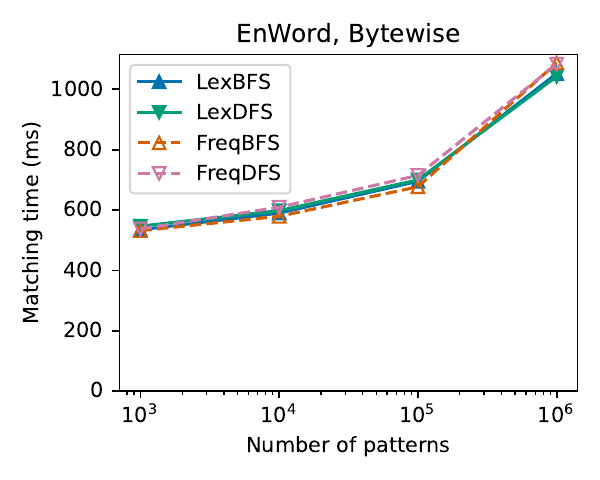}
\includegraphics[scale=0.5]{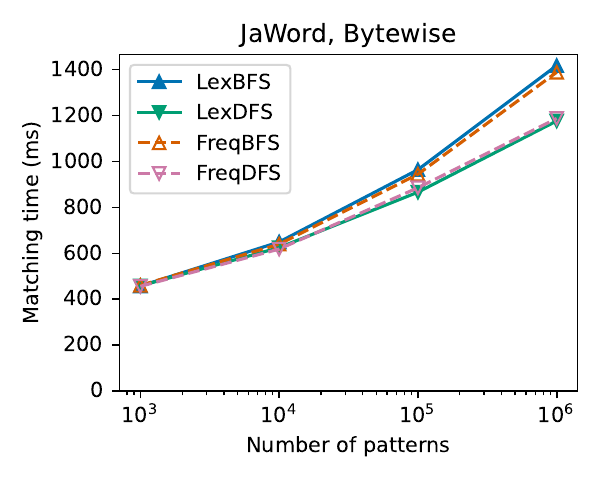}
\includegraphics[scale=0.5]{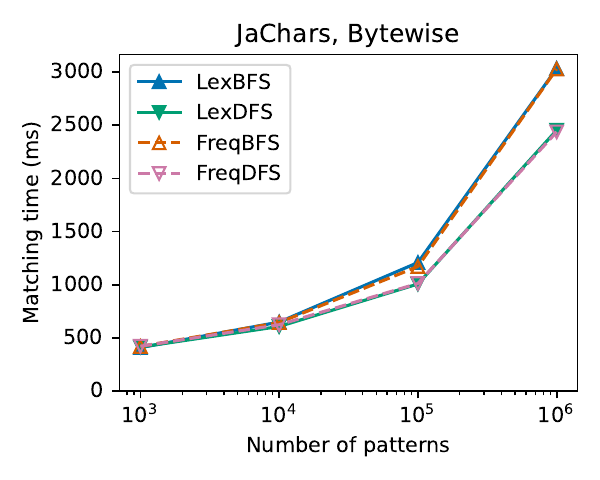}\\
\includegraphics[scale=0.5]{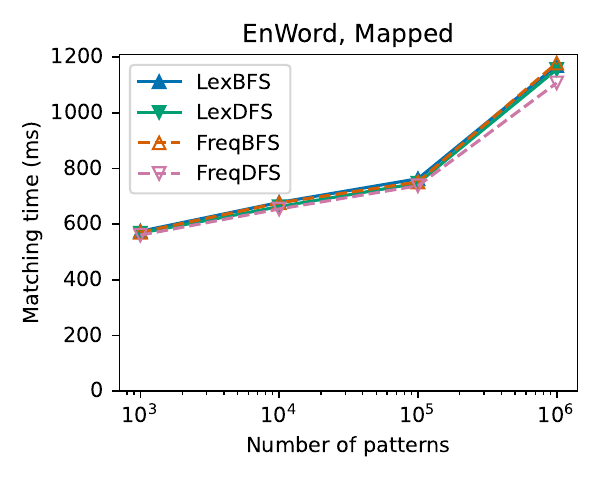}
\includegraphics[scale=0.5]{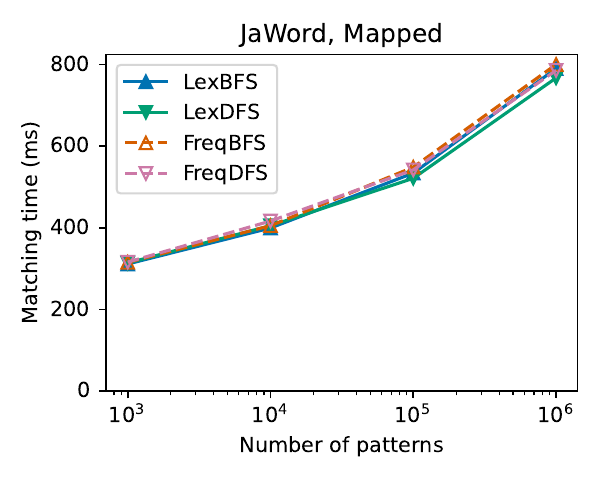}
\includegraphics[scale=0.5]{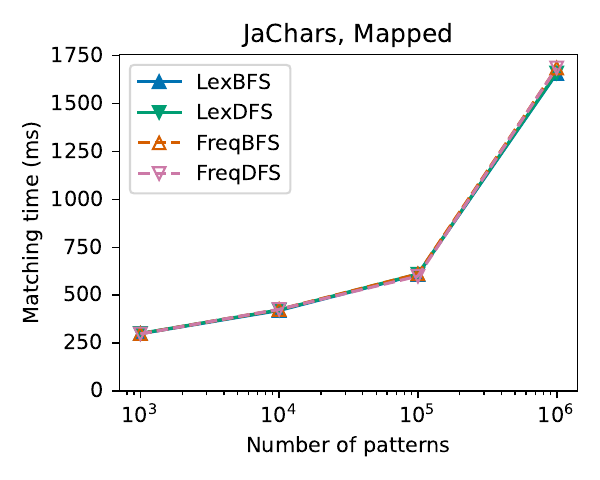}
\caption{
Matching times in milliseconds for different traversal orders.
}
\label{fig:order}
\end{figure}

Figure \ref{fig:order} shows the experimental results of matching times for \Name{LexBFS}, \Name{FreqBFS}, \Name{LexDFS}, and \Name{FreqDFS} presented in Section \ref{sect:tech:order}.
In the \Name{Bytewise} scheme, we fix the other settings to \Name{Forest}, \Name{Packed}, \Name{Compact}, and \Name{SkipForward} ($L=16$).
In the \Name{Mapped} scheme, we fix the other settings to \Name{Forest}, \Name{Packed}, and \Name{SkipForward} ($L=16$).

\begin{figure}[btp]
\centering
\includegraphics[scale=0.6]{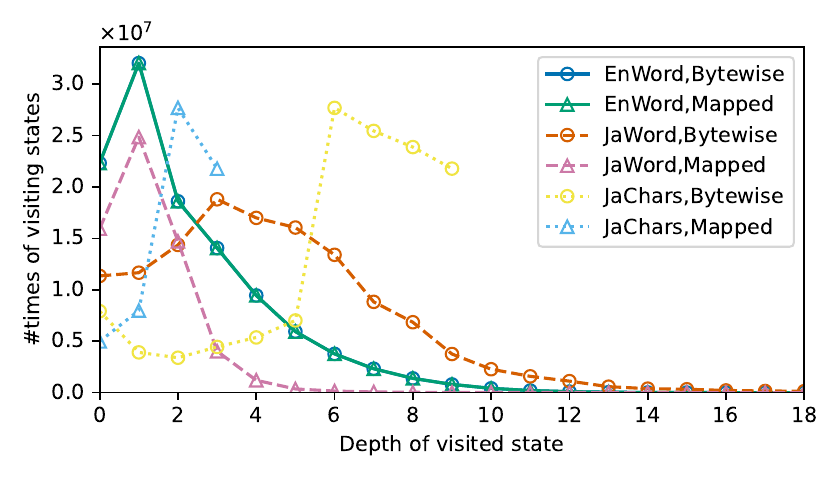}
\caption{
The number of visiting states at each depth during a pattern matching.
}
\label{fig:order:counts}
\end{figure}

\begin{table}[btp]
\footnotesize
\centering
\caption{
The number of L3-load misses occurring during a pattern matching with different traversal orders ({$\times 10^3$}).
The number of patterns is fixed to $10^6$.
The last row shows the different ratios for each result.
}
\label{tab:order:cache}
\begin{tabular}{|l||rr|rr|rr|}
\hline
 & \multicolumn{2}{c|}{\Name{EnWord}} & \multicolumn{2}{c|}{\Name{JaWord}} & \multicolumn{2}{c|}{\Name{JaChars}} \\
Order & \Name{Bytewise} & \Name{Mapped} & \Name{Bytewise} & \Name{Mapped} & \Name{Bytewise} & \Name{Mapped} \\
\hline\hline
\Name{LexBFS} & 2,485 & 2,967 & 4,301 & 3,157 & 31,999 & 17,719 \\
\Name{LexDFS} & 2,303 & 2,888 & 3,209 & 2,944 & 28,804 & 17,917 \\
\hline
\Name{LexDFS}/\Name{LexBFS} & 0.93 & 0.97 & 0.75 & 0.93 & 0.90 & 1.01 \\
\hline
\end{tabular}
\end{table}

We compare the \Name{DFS} and \Name{BFS} orders.
\Name{DFS} is faster than \Name{BFS} for \Name{JaWord} and \Name{JaChars} in \Name{Bytewise}.
To investigate the reason, we show the number of times states are visited in each depth during a pattern matching in Figure \ref{fig:order:counts}.
As we can see, deeper states are often visited for \Name{JaWord} and \Name{JaChars} in \Name{Bytewise}.
The number of L3-load misses occurring during a pattern matching is shown in Table \ref{tab:order:cache},
where we can see that \Name{DFS} is cache-efficient in most cases.
These observations indicate that \Name{DFS} enables cache-efficient traversals on deeper states and is suitable for long patterns.\footnote{As with Section \ref{sect:exp:layout}, we also measured the numbers of L1-load misses, CPU instructions, and branch misses; however, we do not show the numbers because they are not interesting as follows. On CPU instructions and branch misses, the difference between \Name{LexBFS} and \Name{LexDFS} was within $\pm 1.3\%$ and slight. This is essential because the traversal orders are related only to construction and the program in pattern matching is identical. Also on L1-load misses, the difference was within $\pm 10\%$ in most cases and does not impact the matching times reported in Figure \ref{fig:order}.}

Comparing \Name{Lex} and \Name{Freq}, there is no significant difference.
Although we also tested other orders, such as a random order, we did not observe any significant differences.
The reason is explained by the average number of outgoing transitions for an internal state reported in Table \ref{tab:scheme:outgoing}.
The number of outgoing transitions for an internal state was often small, indicating the order of visiting those outgoing transitions was not significant for achieving cache-efficient memory layout.

\subsection{Comparison with other AC automata}

On the basis of the above experimental analyses, we determine the best combinations of techniques and develop a new Rust library, called Daachorse, containing the following two variants:

\begin{itemize}
    \item \Name{Bytewise-Daachorse} = \Name{Forest} + \Name{Bytewise} + \Name{Packed} + \Name{Compact} + \Name{SkipForward} ($L=16$) + \Name{LexDFS}
    \item \Name{Charwise-Daachorse} = \Name{Forest} + \Name{Mapped} + \Name{Packed} + \Name{SkipForward} ($L=16$) + \Name{LexDFS}
\end{itemize}

We compare {Daachorse} with the \emph{Aho-corasick} library \cite{aho-corasick}, which is the most popular implementation of AC automata in Rust.
The data structure employs a hybrid form of the two pointer-based representations: the \emph{matrix and list forms} \cite{askitis2007efficient}.
The matrix form is fast but memory-inefficient, while the list form is memory-efficient but slow.
The hybrid form takes advantages of both by using them depending on the number of transitions for each state.
This library provides two types of implementations:
\Name{NFA-AC} is a standard AC automaton (as described in Section \ref{sect:pre:ac}), and
\Name{DFA-AC} is a deterministic finite automaton compiled from \Name{NFA-AC}.
\Name{DFA-AC} is an option for faster matching, although it consumes a huge amount of memory \cite[Chapter 3.2.3]{navarro2002flexible}.

\begin{figure}[tbp]
\centering
\subfloat[Matching time (ms)]{
\includegraphics[scale=0.5]{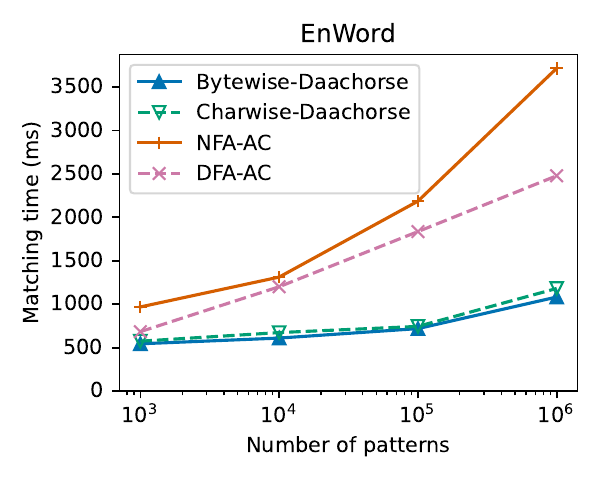}
\includegraphics[scale=0.5]{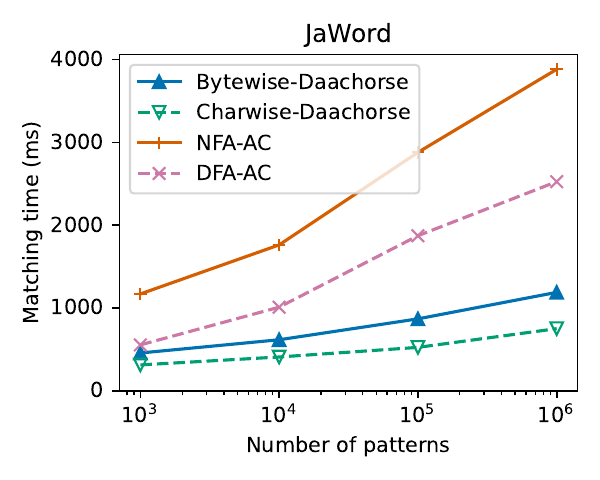}
\includegraphics[scale=0.5]{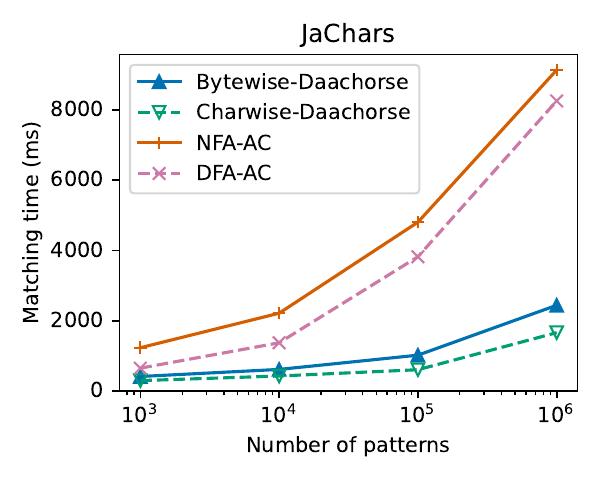}
\label{fig:comparison:match}
}\\
\subfloat[Construction time (ms)]{
\includegraphics[scale=0.5]{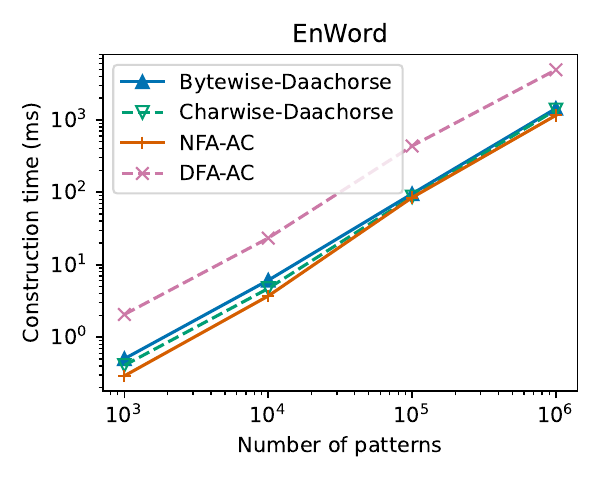}
\includegraphics[scale=0.5]{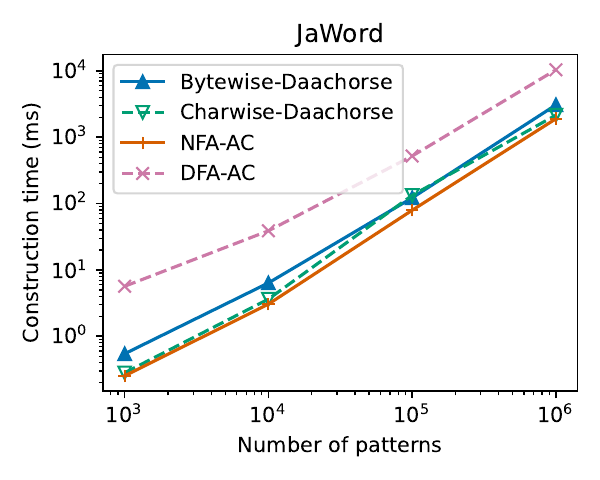}
\includegraphics[scale=0.5]{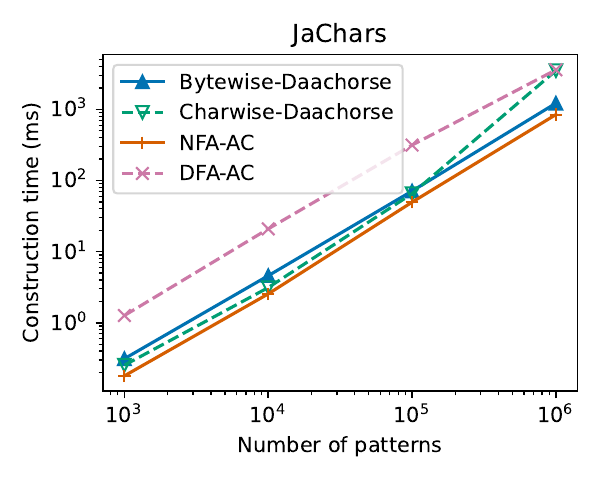}
\label{fig:comparison:constr}
}\\
\subfloat[Memory usage (KiB)]{
\includegraphics[scale=0.5]{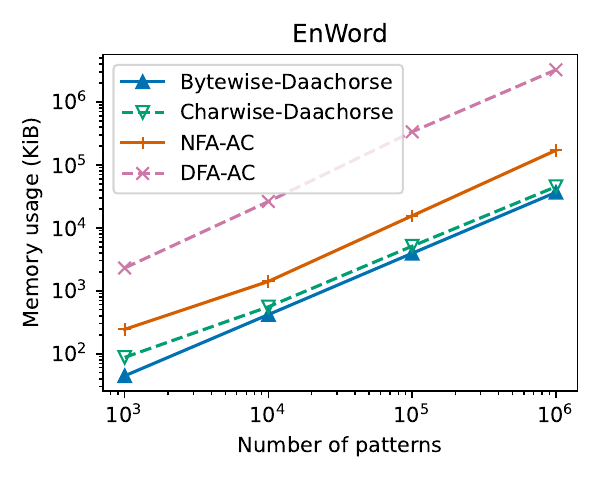}
\includegraphics[scale=0.5]{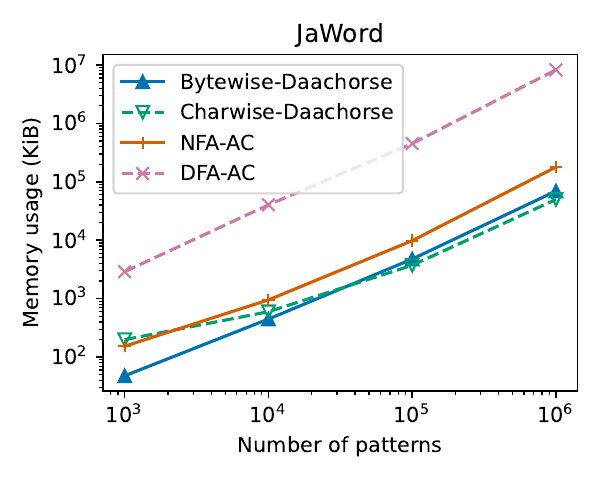}
\includegraphics[scale=0.5]{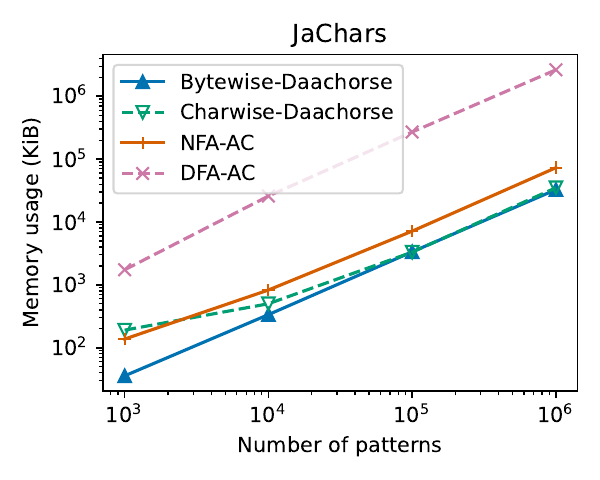}
\label{fig:comparison:size}
}
\caption{
Comparison results with the Daachorse and Aho-corasick libraries.
(a) The matching time is plotted in a linear scale because the time complexity is not linear over the number of patterns.
(b) The construction time and (c) the memory usage are plotted in a logarithmic scale because the complexities are linear over the number of patterns.
}
\label{fig:comparison}
\end{figure}

Figure \ref{fig:comparison} shows the comparison results.
For the matching times reported in Figure \ref{fig:comparison:match}, \Name{Bytewise-Daachorse} is always the fastest on \Name{EnWord}, and \Name{Charwise-Daachorse} is always the fastest on \Name{JaWord} and \Name{JaChars}.
Although \Name{DFA-AC} is also fast when the number of patterns is $10^3$, its performance degrades as the number of patterns increases.
For the construction times reported in Figure \ref{fig:comparison:constr}, \Name{NFA-AC} is always the fastest, and \Name{Bytewise} and \Name{Charwise-Daachorse} approach \Name{NFA-AC}.
\Name{DFA-AC} is always the slowest because compiling from \Name{NFA-AC} takes time.
For the memory usages reported in Figure \ref{fig:comparison:size}, \Name{Bytewise-} and \Name{Charwise-Daachorse} are often the smallest;
however, when the number of patterns is $10^3$, \Name{Charwise-Daachorse} is larger than \Name{NFA-AC} because of the memory consumption of the mapping $\pi$.

\section{Application example: Japanese tokenization}
\label{sect:syst}

Vaporetto \cite{vaporetto}, a supervised Japanese tokenizer written in Rust, is an application example of AC automata.
Vaporetto is an efficient implementation of the pointwise prediction algorithm \cite{mori2011pointwise} and leverages the AC algorithm in its feature extraction phase.
To demonstrate the ability of Daachorse, we integrated the Daachorse and Aho-corasick libraries with Vaporetto and evaluated the tokenization speeds.

We conducted these experiments in the same environment as the experiments in Section \ref{sect:exp}.
To train the Vaporetto model, we used 60K sentences with manually annotated short-unit-word boundaries in BCCWJ (version 1.1) \cite{maekawa2014bccwj} and 667K unique words in UniDic (version 3.1.0) \cite{den2007unidic}, while specifying default parameters in Vaporetto.
We performed tokenization for 5.9M sentences in BCCWJ that are not used for the training and then measured the running time.

The experimental result showed that:

\begin{itemize}
\item \Name{Bytewise-Daachorse} performed in 3.3 microseconds per sentence,
\item \Name{Charwise-Daachorse} performed in 2.9 microseconds per sentence,
\item \Name{NFA-AC} performed in 7.5 microseconds per sentence, and
\item \Name{DFA-AC} performed in 7.3 microseconds per sentence.
\end{itemize}

Daachorse was at most \Times{2.6} faster than the Aho-corasick library.
This result indicates that the time efficiency of the AC algorithm is critical in Vaporetto's tokenization, demonstrating the utility of our Daachorse.

\section{Related works}
\label{sect:related}

\paragraph{Double-array}

The two most popular implementations of DAACs are
the Java library by hankcs \cite{AhoCorasickDoubleArrayTrie} and the Go library by Ruohang \cite{Vonng-ac}.
Both libraries are implemented in a similar manner and employ the same techniques: \Name{Simple} (although the memory layout is not identical), \Name{Charwise}, \Name{Individual}, \Name{SkipDense} ($\tau=0.05$, without \Name{Chain} and the block partitioning), and \Name{LexBFS}.
As demonstrated in Section \ref{sect:exp}, these techniques do not provide the best performances in most cases, and the implementations could be improved by applying \Name{Forest} instead of \Name{Simple}, \Name{Mapped} instead of \Name{Charwise}, \Name{Packed} instead of \Name{Individual}, \Name{SkipForward} instead of \Name{SkipDense}, and \Name{LexDFS} instead of \Name{LexBFS}.

Some studies have represented the integers of \Name{BASE} and \Name{CHECK} in a compressed space other than \Name{Compact} (described in Section \ref{sect:tech:format}).
Fuketa et al. \cite{fuketa2004new} proposed partitioning one double-array structure into several smaller ones to implement \Name{BASE} and \Name{CHECK} consisting of 2-byte integers;
however, as this approach produces many structures for a large trie, many pointers must be maintained to connect the structures.
Fuketa et al. \cite{fuketa2014compression} proposed eliminating \Name{BASE} by introducing additional character mappings, although its applications are limited to fixed-length keywords such as zip codes.
Kanda et al. \cite{kanda2016compression,kanda2017compressed} proposed approaches to compress the integers of \Name{BASE} and \Name{CHECK} through differential encoding;
however, time efficiency can degrade because of more computations.

Another approach to compress the double-array is to employ compact trie forms.
The \emph{minimal-prefix (MP) trie} \cite{aoe1992efficient,dundas1991implementing} is an often-used form \cite{aoe1989efficient,kanda2017compressed,yata2007compact}, and various methods to efficiently implement the MP-trie using the double-array have been proposed \cite{yata2007efficient,dorji2010new,kanda2017rearrangement,yata2006double}.
However, the MP-trie is specialized for dictionary lookups, which are done by traversing a trie from the root to a leaf.
Other compact forms such as double-tries \cite{aoe1996trie} and directed acyclic word graphs \cite{yata2008fast} cannot be applied to AC automata in principle.

Liu et al. \cite{liu2011compression} showed that double-array structures resulting from large alphabets (e.g., multibyte characters) can include many vacant ids and are thus memory inefficient.
They proposed several approaches to address this problem and demonstrated that code mapping with the frequency of characters is effective on Chinese strings.
The efficiency of the code mapping was also demonstrated on $N$-gram sequences \cite{norimatsu2016fast,yasuhara2013efficient}.

\paragraph{Alternative trie representations}

Tries \cite{fredkin1960trie} have been studied since the 1960s, and there are many data structures to represent.
Since the AC automaton is a simple extension of the trie, we have many alternatives to the double-array.
The conventional data structures are a \emph{matrix form} and a \emph{list form} \cite{askitis2007efficient}.
The matrix form uses a transition matrix of size $|S| \times |\Sigma|$ and performs a transition lookup in $O(1)$ time, while consuming a large space of $O(|S||\Sigma|)$.
The list form stores a set of transition labels from each state in a sorted array and performs a transition lookup by binary search in $O(\log |\Sigma|)$ time, while consuming a smaller space of $O(|S|)$.
The matrix and list forms have a time-space trade-off, and the double-array harnesses both advantages by compressing the matrix form.

There are other data structures that utilize both advantages of the matrix and list forms.
One is a {hybrid} approach that uses the matrix form for states with many outgoing transitions and the list form for states with few outgoing transitions.
The aho-corasick library \cite{aho-corasick} employs the hybrid approach and is outperformed by Daachorse, as we demonstrated.
Another approach is {hashing} \cite{askitis2007efficient}, which stores mappings from states to outgoing transitions in hash tables.
This approach can perform a transition lookup in $O(1)$ expected time while consuming $O(|S|)$ space.
However, searching in the hash table often requires more computations than transition lookups in the double-array.
Prior experiments \cite{nieminen2007efficient} demonstrated that DAACs outperformed AC automata with the hashing approach.

Compressed representations of tries have recently been proposed to store massive datasets in main memory.
\emph{Succinct tries} \cite{arroyuelo2010succinct} are representative data structures.
Their memory usage achieves $|S| \log_2 |\Sigma| + O(|S|)$ bits of space and is close to the information-theoretic lower bound \cite{jacobson1989space}.
However, the succinct tries employ many bit manipulations in tree navigational operations, and their time efficiency is not competitive with that of double-arrays \cite{kanda2016compression,kanda2017compressed}.
More compressed data structures, such as XBW \cite{ferragina2005structuring} and Elias-Fano tries \cite{pibiri2017efficient,pibiri2019handling}, also have similar time bottlenecks and are not suited to design fast AC automata.

\paragraph{Compressed representations of AC automata}

Another line of research proposes data structures to represent AC automata (i.e., transition, failure, and output functions) in a compressed space.
Belazzougui \cite{belazzougui2010succinct} proposed the first compressed data structure that supports a matching in optimal $O(n + \textsf{occ})$ time, while achieving $|S| \log_2 |\Sigma| + O(|S|)$ bits of space.
Hon et al. \cite{hon2013faster} achieved an entropy compressed space while matching time remains optimal.
I et al. \cite{tomohiro2015compressed} designed a matching algorithm working on grammar-based compressed AC automata.
However, these studies were accomplished through theoretical discussions, and we are unaware of any actual implementation.

\section{Conclusion}
\label{sect:conc}

In this paper, we provided a comprehensive description of implementation techniques in DAACs and experimentally revealed the most efficient combinations of the techniques to achieve higher performance.
We also designed a data structure of DAACs and developed a new Rust library for faster multiple pattern matching, called Daachorse.
Our experiments showed that, compared to other implementations of AC automata, Daachorse offered a superior performance in terms of time and space efficiency.
As we demonstrated in the integration test with Vaporetto, Daachorse has significant potential to improve applications that employ multiple pattern matching.

Our future work will incorporate Daachorse in other applications to enable faster text processing.
We are also interested in the performance changes of data structures in different environments such as small devices.
Another future work will conduct high performance tuning under various environments and propose data structures optimized for those environments.

\section*{Acknowledgement}

We thank Shinsuke Mori for his cooperation and Keisuke Goto for his insightful review.

\bibliographystyle{plain}
\bibliography{library}

\end{document}